\newcommand{\langname}{Resource-Aware SILL}
\newcommand{\W}{\Omega}
\newcommand{\Sg}{\Sigma}
\newcommand{\xvdash}[1]{%
  \vdash^{\mkern-10mu\scriptstyle\rule[-.9ex]{0pt}{0pt}#1}%
}
\newcommand{\potconf}[1]{\overset{#1}{\vDash}}
\newcommand{\measure}{weight}
\newcommand{\semi}{\; ; \;}
\newcommand{\ichoiceop}{\oplus}
\newcommand{\echoiceop}{\with}
\newcommand{\ichoice}[1]{\ichoiceop \{ #1 \}}
\newcommand{\echoice}[1]{\echoiceop \{ #1 \}}
\newcommand{\proc}[2]{\m{proc}(#1, #2)}
\newcommand{\msg}[2]{\m{msg}(#1, #2)}
\newcommand{\ecase}[3]{\m{case} \; #1 \; (#2 \Rightarrow #3)}
\newcommand{\erecvch}[2]{#2 \leftarrow \m{recv} \; #1}
\newcommand{\esendch}[2]{\m{send} \; #1 \; #2}
\newcommand{\ewait}[1]{\m{wait} \; #1}
\newcommand{\eclose}[1]{\m{close} \; #1}
\newcommand{\fwd}[2]{#1 \leftarrow #2}
\newcommand{\m}[1]{\mathsf{#1}}
\newcommand{\esendl}[2]{#1.#2}
\newcommand{\ecut}[4]{#1 \leftarrow #2 \leftarrow \overline{#3} \semi #4}
\newcommand{\espawn}[4]{#1 \leftarrow #2 \leftarrow #3 = #4}
\newcommand{\procdef}[3]{#3 \leftarrow #1 \leftarrow #2}
\newcommand{\procdefna}[2]{#2 \leftarrow #1}
\newcommand{\casedef}[1]{\m{case} \; #1}
\newcommand{\labdef}[1]{#1 \Rightarrow}
\newcommand{\lolli}{\multimap}
\newcommand{\tensor}{\otimes}
\newcommand{\with}{\mathbin{\binampersand}}
\newcommand{\one}{\mathbf{1}}
\newcommand{\mi}[1]{\mbox{\it #1}}
\newcommand{\pot}[2]{#1^{#2}}
\newcommand{\lollipot}[1]{\overset{#1}{\lolli}}
\newcommand{\tensorpot}[1]{\overset{#1}{\tensor}}
\newcommand{\potfop}{\phi}
\newcommand{\potf}[1]{\potfop(#1)}
\newcommand{\mlab}{M^{\textsf{label}}}
\newcommand{\mchan}{M^{\textsf{channel}}}
\newcommand{\mcl}{M^{\textsf{close}}}
\newcommand{\entailpot}[1]{\xvdash{#1}}
\newcommand{\fpot}{\; @ \;}
\newcommand{\entailpotcf}[1]{\underset{\m{cf}}{\entailpot{#1}}}
\newcommand{\config}{\mathcal{C}}
\newcommand{\dc}{\mathcal{D}}
\newcommand{\st}[1]{\m{store}_{#1}}
\newcommand{\stack}[1]{\m{stack}_{#1}}
\newcommand{\queue}[1]{\m{queue}_{#1}}
\newcommand{\mapper}[1]{\m{mapper}_{#1}}
\newcommand{\fdr}[1]{\m{folder}_{#1}}
\newcommand{\lt}[1]{\m{list}_{#1}}
\newcommand{\bits}{\m{bits}}
\newcommand{\step}{\mapsto}
\newcommand{\fresh}[1]{(#1 \text{ fresh})}
\newcommand{\sinfer}[3]
{\inferrule
{#3}
{#2}
\; #1}
\newcommand{\nil}{[]}
\newcommand{\ceil}[1]{\left\lceil #1 \right\rceil}
\newcommand{\comment}[1]{}
\newcommand{\bin}[1]{(#1)_2}
\newcommand{\ctr}{\m{ctr}}
\newcommand{\ignore}[1]{\textcolor{red}{#1}}
\title{Work Analysis with \\ Resource-Aware Session Types
 \vspace{-1em}}
\author{Ankush Das, Jan Hoffmann and Frank Pfenning
 \vspace{-5pt}}
\institute{Carnegie Mellon University, Pittsburgh, PA, USA}
\begin{document}
\maketitle

\vspace{-5ex}

\begin{abstract}
  While there exist several successful techniques for supporting
  programmers in deriving static resource bounds for sequential code,
  analyzing the resource usage of message-passing concurrent processes
  poses additional challenges.
  To meet these challenges, this article presents an analysis for
  statically deriving worst-case bounds on the total work performed by
  message-passing processes.
  To decompose interacting processes into components that can be
  analyzed in isolation, the analysis is based on novel resource-aware
  session types, which describe protocols and resource contracts for
  inter-process communication. A key innovation is that both messages
  and processes carry potential to share and amortize cost while
  communicating.
  To symbolically express resource usage in a setting without static
  data structures and intrinsic sizes, resource contracts describe
  bounds that are functions of interactions between processes.
  Resource-aware session types combine standard binary session types
  and type-based amortized resource analysis in a linear type system.
  This type system is formulated for a core session-type calculus
  of the language SILL and proved sound with respect to a
  multiset-based operational cost semantics that tracks the total
  number of messages that are exchanged in a system.
  The effectiveness of the analysis is demonstrated by analyzing
  standard examples from amortized analysis and the literature on
  session types and by a comparative performance analysis of
  different concurrent programs implementing the same interface.
\end{abstract}

\vspace{-2.5em}

\section{Introduction}

% The efficiency of programs has always been a central topic of
% computer science and programming language research. 
In the past years, there has been increasing interest in supporting
developers to statically reason about the resource usage of their
code. There are different approaches to the problem that are based on
type systems~\cite{Jost03,LagoP13,CicekGA15,CicekBGGH16,HoffmannW15}, abstract
interpretation~\cite{GulwaniMC09,AliasDFG10,CernyHKRZ15}, recurrence
relations~\cite{FloresH14,AlbertGM12,KincaidBBR2017}, termination
analysis~\cite{Zuleger11,BrockschmidtEFFG14,AvanziniLM15,HofmannM15}, and other
techniques~\cite{CarbonneauxHRS17,DannerLR15}.
Among the applications of this research we find the prevention of side
channels that leak secret
information~\cite{NgoDFH16,Antonopoulos17,ChenFD17}, identification of
complexity bugs~\cite{OlivoDL15}, support of scheduling
decisions~\cite{AcarCR16}, and help in
profiling~\cite{HaemmerleLLKGH16}.

While there has been great progress in analyzing sequential code,
relatively little research has been done on analyzing the resource
consumption of concurrent and distributed
programs~\cite{GimenezM16,AlbertFGM16,AlbertACGGMPR15}. The lack of analysis tools is in sharp
contrast to the need of programming language support in this area:
concurrent and distributed programs programs are both increasingly
pervasive and particularly difficult to analyze.  For shared memory
concurrency, we need to precisely predict scheduling to account
for synchronization cost. Similarly, the interactive nature of
message-passing systems makes it difficult to decompose the
system into components that can be analyzed in isolation. After all,
the resource usage of each component crucially depends on its
interactions with the world.

In this paper, we study the foundations of worst-case resource
analysis for message-passing processes. A key idea of our approach is
to rely on \emph{resource-aware session types} to describe structure,
protocols, and resource bounds for inter-process communication that we
can use to perform a compositional and precise amortized analysis.
\emph{Session
types}~\cite{Honda93CONCUR,Honda98esop,Caires10concur,Caires16mscs,Wadler12icfp}
prescribe bidirectional communication protocols for message-passing
processes.  \emph{Binary session types} govern the interaction of two
processes along a single channel, prescribing complementary send and
receive actions for the processes at the two endpoints of a channel.
We use such protocols as the basis of resource usage contracts that
not only specify the type but also the potential of a message that is
sent along a channel.  The potential (in the sense of classic
amortized analysis~\cite{tarjan85}) may be spent sending other
messages as part of the network of interacting processes, or
maintained locally for future interactions.  Resource analysis is
static, using the type system, and the runtime behavior of programs is
not affected.

% The cost of the message loosely corresponds to the resource usage that
% the message triggers at the receiving process. However, it does not
% necessary bound the exact resource usage triggered by that
% message. The cost contract is instead a sound mechanism to
% redistribute the actual resource in a more regular way. For example,
% the receiving process can provide a simple cost contract but maintain
% an internal potential to amortize costly book-keeping operations like
% in classic amortized analysis~\cite{tarjan85}. In this way, the cost
% contract describes the amortized cost of sending a message.

We focus on bounds on the total work that is performed
by a system, counting the number of messages that are exchanged.
While this alone does not account yet for the concurrent nature of
message-passing programs it constitutes a significant and necessary
first step.
The bounds we derive are also useful in their own right.
For example, the information can be used in scheduling decisions, to
bound the number of messages that are sent along a specific channel,
or to statically decide whether we should spawn a new thread of
control or execute sequentially when possible.
Additionally, bounds on the work of a process can also serve as input
to a Brent-style theorem~\cite{brent2013algorithms} that relates the
complexity of the execution of a program on a $k$-processor machine to
the program's work (the focus of this paper) and span (the resource
usage if we assume an unlimited number of processors). We are working
on a companion paper for deriving bounds on the span, which is both
conceptually and technically quite different.

% \jan{not sure about the following?}  While it seems technically
% straightforward to extend our analysis to more realistic resource
% metrics, counting the number of messages provides the right level of
% abstraction for a comprehensive study of work in message-passing
% concurrency.

Our analysis is based on a linear type system that combines standard
binary session types as available in the SILL
language~\cite{Toninho13esop,Pfenning15FOSSACS}, and type-based
amortized resource analysis~\cite{Jost03,HoffmannW15}. Both techniques
are based on linear or affine type systems, making their combination
natural. Each session type constructor is decorated with a natural
number that declares a potential that must be transferred
(conceptually!) along with the corresponding message.
Since the interface to a process is characterized entirely by the
resource-aware session types of the channels it can interact with,
this design provides for a compositional resource specification.  For
closed programs (which evolve into a closed network of interacting
processes) the bound then becomes a single constant.
In addition to the natural compositionality of type systems we also
preserve the good support for deriving resource annotations via LP
solving which is a key feature of type-based amortized analysis. While we
have not yet implemented a type inference algorithm, we designed the
system with support for type inference in mind.
Moreover, resource-aware session types integrate well with type-based
amortized analysis for functional programs because they are based on
compatible logical foundations (intuitionistic linear logic and
intuitionistic logic, respectively), as exemplified in the design of
SILL~\cite{Toninho13esop,Pfenning15FOSSACS} that combines them
monadically.

A conceptual challenge of the work is to express symbolic bounds in a
setting without static data structures and intrinsic sizes. Our
innovation is that resource-aware session types describe bounds that
are functions of interactions (i.e., messages sent) along a channel.
A major technical challenge is to account for the global number of
messages sent with local derivation rules: Operationally, local
message counts are forwarded to a parent process when a sub-process
terminates. As a result, local message counts are incremented by
sub-processes in a rather non-local fashion. Our solution is that both
messages and processes carry potential to share and amortize the cost of a
terminating sub-process proactively as a side-effect of the
communication.

Our main contributions are as follows. We present the first session
type system for deriving parametric bounds on the resource usage of
message-passing processes. We prove the nontrivial soundness of type
system with respect to an operational cost semantics that tracks the
total number of messages exchanged in a network of communicating
process.  We demonstrate the effectiveness of the technique by
deriving tight bounds for some standard examples from amortized
analysis and the literature on session types. We also show how
resource-aware session types can be used to specify and compare the
performance characteristics of different implementations of the same
protocol.
The analysis is currently manual, with automation left for future work, as
is a companion type system for deriving the \emph{span} of concurrent
computations in the same language.

%%% Local Variables:
%%% mode: latex
%%% mode: flyspell
%%% TeX-master: "main"
%%% End:

\vspace{-5pt}

\section{Overview}\label{sec:overview}

In this section, we motivate and informally introduce our
resource-aware session types and show how they can be used to analyze
the resource usage of message-passing processes. We start with
building some intuition about session types.

\paragraph{Session Types.}\label{subsec:intro}

Session types have been introduced by Honda~\cite{Honda93CONCUR} to
describe the structure of communication just like standard data types
describe the structure of data. We follow the approach and syntax of
SILL~\cite{Toninho13esop,Pfenning15FOSSACS} which is based on a
Curry-Howard isomorphism between intuitionistic linear logic and
session types, extended by recursively defined types and processes.
In the intuitionistic approach, every channel has a \emph{provider}
and a \emph{client}. We view the session type as describing the
communication from the provider's point of view, with the client
having to perform matching actions.

As a first simple example, we consider natural numbers in binary form.
A process \emph{providing} a natural number sends a stream of bits
starting with the least significant bit.  These bits are represented
by messages $\m{zero}$ and $\m{one}$, eventually terminated by
$\m{\$}$. Because the provider chooses which messages to send,
we call this an \emph{internal choice}, which is written as
$$\m{bits} = \ichoice{\m{zero} : \m{bits}, \m{one} : \m{bits}, \m{\$} : \one} \; .$$
Here, $\ichoice{l_1 : A_1, \ldots, l_n : A_n}$ is an n-ary, labelled
generalization of $A \oplus B$ of linear logic, and $\one$ is the
multiplicative unit of linear logic.  Operationally, $\one$ means the
provider has to send an $\mi{end}$ message, closing the channel and
terminating the communication session.  For example, the number
$6 = (110)_2$ would be represented by the sequence of messages
$\m{zero}, \m{one}, \m{one}, \m{\$}, \mi{end}$.

The session type does not prescribe any particular implementation only
the interface to a process.  In this example, a client of a channel
$c : \m{bits}$ has to branch on whether it receives $\m{zero}$,
$\m{one}$, or $\m{\$}$. Note that as we proceed in a session, the type
of a channel must change according to the protocol.  For example, if a
client receives the message $\m{\$}$ along $c : \m{bits}$ then $c$
must afterwards have type $\one$. The next message along $c$ must
be $\mi{end}$ and we have to wait for that after receiving $\$$ so the
session is properly closed.

As a second example we describe the interface to a counter.  As a
client, we can repeatedly send $\m{inc}$ messages to a counter, until we want to
read its value and send $\m{val}$. At that point the counter will
send a stream of bits representing its value as prescribed by the type
$\m{bits}$. From the provider's point of view, a counter presents an
\emph{external choice}, since the client chooses between $\m{inc}$ or
$\m{val}$.
  $$\m{ctr} = \echoice{\m{inc} : \m{ctr}, \m{val} : \m{bits}}$$
The type former $\echoice{l_1 : A_1, \ldots, l_n : A_n}$ is an n-ary labelled
generalization of $A \echoiceop B$ of linear logic. Operationally, the
provider must branch based on which of the labels $l_i$ it receives.
After receiving $l_k$ along a channel
$c : \echoice{l_1 : A_1, \ldots, l_n : A_n}$, communication along $c$
proceeds at type $A_k$.

Such type formers can be arbitrarily nested to allow more complex
bidirectional protocols. Consider for example the store protocol,
which is defined by the following type.
\begin{center}
\begin{minipage}{3cm}
\begin{tabbing}
$\st{A} = \echoice{$ \= $\m{ins} : A \lolli \st{A},$ \\
\> $\m{del} : \ichoice{\m{none} : \one, \m{some} : A \tensor \st{A}}}$ 
\end{tabbing}
\end{minipage}
\end{center}
A provider of a channel $c$ of type $\st{A}$ either accepts an
$\m{ins}$ or $\m{del}$ message. If it receives the $\m{ins}$ message,
the type of $c$ is now $A \lolli \st{A}$, the implication of linear
logic.  It means the provider now receives a \emph{channel} of type
$A$ along $c$ and then behaves again like a store.  If it receives a
$\m{del}$ message, it responds with either the message $\m{none}$ or
the message $\m{some}$ (an internal choice, $\oplus$). If it sends
$\m{none}$ it next must send an $\mi{end}$ message and terminate.  If
it sends $\m{some}$, the type of $c$ is now $A \tensor \st{A}$.  This
corresponds to the multiplicative conjunction of linear logic and,
operationally, requires the provider to now send a channel of type $A$
and then behave again as $\st{A}$. Linearity of the type system
guarantees that the channels retrieved from a store of this type are
some permutation of the channels inserted.  It may, for example,
behave as a stack or a queue (as explained in
Section~\ref{sec:case_study}).

\paragraph{Modeling a binary counter.}

We will now describe an implementation of a counter and use our
resource-aware session types to analyze its resource usage. Like in
the rest of the paper, the resource we are interested in is the total
number of messages sent along all channels in the system.

% Note however, that the type system can easily be adopted to other
% resource metrics that assign a constant cost to each step in the
% semantics.

A well-known example of amortized analysis counts the number of bits
that must be flipped to increment a counter.  It turns out the
amortized cost per increment is 2, so that $n$
increments require at most $2n$
bits to be flipped. We can see this by introducing a
potential of $1$
for every bit that is $1$
and using the potential to \emph{pay} for the expensive case in which
and increment triggers many flips.  When the lowest bit is zero, we
flip it to one (costing 1) and also store a remaining potential of 1
with this bit. When the lowest bit is one we use the stored potential
to flip the bit back to zero (with no stored potential) and use the
remaining potential of 2 for incrementing the higher bits.

\begin{figure}[t]
\centering
\includegraphics[width=0.6\linewidth]{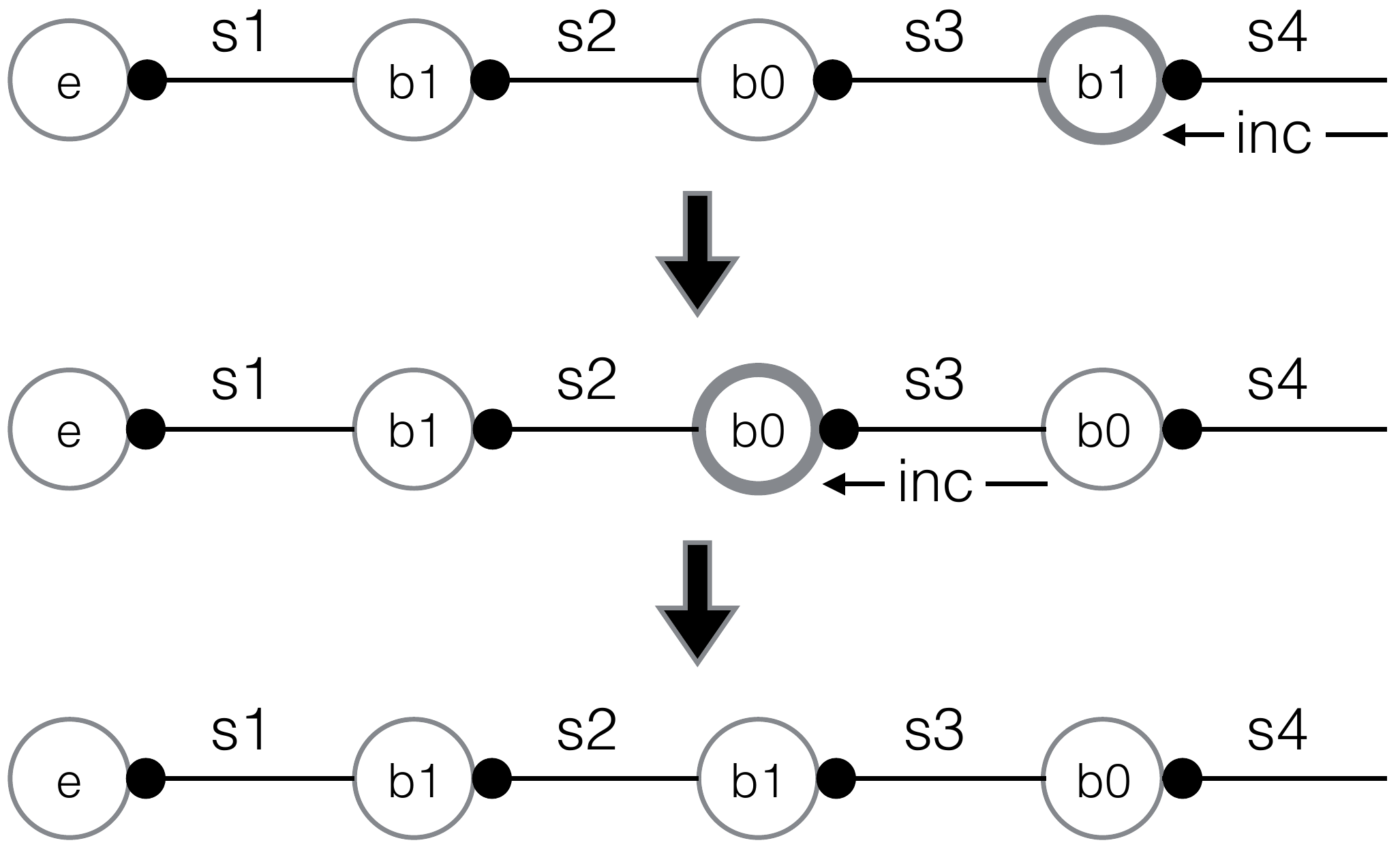}
\caption{A binary counter system representing $5 = \bin{101}$
with the messages triggered when an $\m{inc}$ message
is received along $s_4$.\vspace{-1.5em}}
\label{fig:binary-counter-inc}
\end{figure}
We model a binary counter as a chain of processes where each process
represents a single bit (process $b0$ or $b1$) with a final process
$e$ at the end.
Each of the processes in the chain \emph{provides} a channel
of the $\m{ctr}$ type, and each (except the last) also \emph{uses} a
channel of this type representing the higher bits. For example, in the
first chain in Figure~\ref{fig:binary-counter-inc}, the process $b0$
offers along channel $s_3$ and uses channel $s_2$.  In our notation,
we would write this as
\[
\begin{array}{rcl}
\cdot & \vdash & e :: (s_1 : \m{ctr}) \\
s_1 : \m{ctr} & \vdash & b1 :: (s_2 : \m{ctr}) \\
s_2 : \m{ctr} & \vdash & b0 :: (s_3 : \m{ctr}) \\
s_3 : \m{ctr} & \vdash & b1 :: (s_4 : \m{ctr})
\end{array}
\]
We see that, logically, parallel composition with a private shared
channel corresponds to an application of the cut rule. We do not show
here the \emph{definitions} of $e$, $b0$, and $b1$, which can be found
in Figure~\ref{fig:bincount_impl}.  The only channel visible to an outside client
(not shown) is $s_4$.  Figure~\ref{fig:binary-counter-inc} shows
the messages triggered if an increment message is received along
$s_4$.

\paragraph{Expressing resource bounds.}

Our basic approach is that \emph{messages carry potential} and
\emph{processes store potential}.  This means the sender has to pay
not just 1 unit for sending the message, but whatever additional units
to amortize future costs.  In the amortized analysis of the counter
each bit flip corresponds exactly to an $\m{inc}$ message, because
that is what triggers a bit to be flipped. Our cost model focuses on
messages as prescribed by the session type and does not count other
operations, such as spawning a new process or terminating a
process. This choice is not essential to our approach, but convenient
here.

To capture the informal analysis we need to express \emph{in the type}
that we have to send $1$
unit of potential with the label $\m{inc}$.
We do this using a superscript indicating the required potential with
the label, postponing for now the discussion of $\m{val}$.
$$\m{ctr} = \echoice{\m{inc}^1 : \m{ctr}, \m{val}^? : \m{bits}}\;.$$
When we assign types to the processes, we now use these more
expressive types.  We also indicate the potential stored in
a particular process as a superscript on the turnstile.
\begin{eqnarray}
t : \m{ctr} & \entailpot{0} & b0 :: (s : \m{ctr}) \label{b0_type} \\
t : \m{ctr} & \entailpot{1} & b1 :: (s : \m{ctr}) \label{b1_type} \\
\cdot & \entailpot{0} & e :: (s : \m{ctr}) \label{e_type} 
\end{eqnarray}
With our formal typing rules (see Section~\ref{sec:typing}) we can 
verify these typing constraints, using the definitions of $b0$, $b1$,
and $e$.  Informally, we can reason as follows:
\begin{description}
\item[$b0$:] When $b0$ receives $\m{inc}$ it receives 1 unit of
  potential. It continues as $b1$ (which requires no communication)
  which stores this 1 unit (as prescribed from the type of $b1$ in
  Equation~\ref{b1_type}).
\item[$b1$:] When $b1$ receives $\m{inc}$ it receives 1 unit of
  potential which, when combined with the stored one, makes 2 units.
  It needs to send an $\m{inc}$ messages which consumes these 2
  units (1 to send the message, and 1 to send along a potential of
  1 as prescribed in the type).  It has no remaining potential, which
  is sufficient because it transitions to $b0$ which stores no potential
  (inferred from the type of $b0$ in Equation~\ref{b0_type}).
\item[$e$:] When $e$ receives $\m{inc}$ it receives 1 unit of
  potential.  It spawns a new process $e$ and continues as $b1$.
  Spawning a process is free, and $e$ requires no potential, so it can
  store the potential it received with $b1$ as required.
\end{description}

\begin{figure}[t]
\centering
\includegraphics[width=0.6\linewidth]{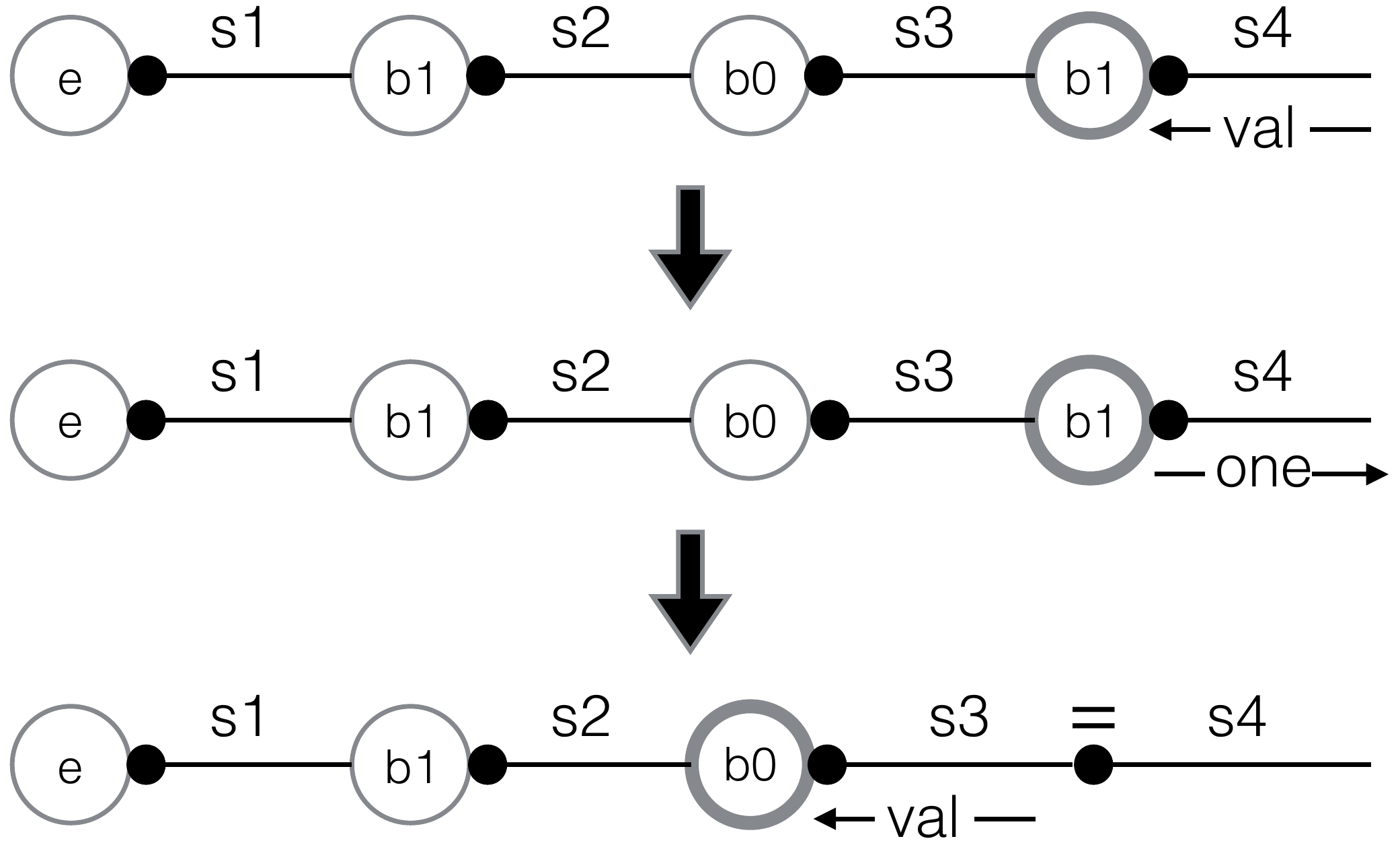}
\caption{A binary counter system representing $5 = \bin{101}$
with the messages triggered when a $\m{val}$ message
is received along $s_4$.\vspace{-1.5em}}
\label{fig:binary-counter-val}
\end{figure}

\noindent How do we handle the type annotation $\m{val}^? : \m{bits}$ of the label $\m{val}$?
Recall that
$\m{bits} = \ichoice{\m{zero} : \m{bits}, \m{one} : \m{bits}, \m{\$} : \one}$.
In our implementation, upon receiving a $\m{val}$ message, a $b0$ or
$b1$ process will first respond with $\m{zero}$ or $\m{one}$.  It then
sends $\m{val}$ along the channel it uses (representing the higher
bits of the number) and terminates by \emph{forwarding} further
communication to the higher bits in the chain.  The $e$ process will
just send $\m{\$}$ and $\mi{end}$, indicating the empty stream of
bits.

We know we will have enough potential to carry out the required send
operations if each process ($b0$, $b1$, and $e$) carries an additional
2 units of potential.  We could impart these with the $\m{inc}$
and $\m{val}$ messages by sending 2 more units with $\m{inc}$ and
2 units with $\m{val}$.  That is, the following type is correct:
\begin{eqnarray*}
\m{bits} &=& \ichoice{\m{zero}^0 : \m{bits}, \m{one}^0 : \m{bits}, \m{\$}^0 : \one^0} \\
\m{ctr} &=& \echoice{\m{inc}^3 : \m{ctr}, \m{val}^2 : \m{bits}}
\end{eqnarray*}
Here, the superscript $0$ in the type of $\m{bits}$ indicates
that the corresponding messages carry no potential.

However, this type is a gross over-approximation.  The processes
of a counter of value $n$,
would carry $2n$
additional units of potential while only $2\ceil{\log(n+1)} + 2$
are needed.  To obtain this more precise bound, we need
\emph{families of session types}.

\paragraph{A more precise analysis.}

A more precise session type for this example requires that \emph{in
  the type} we can refer either to the number of bits in the
representation of a number or its value.  This form of internal
measure is needed only for type-checking purposes, not at runtime.  It
is also not intrinsically tied to a property of a representation, the
way the length of a list in a functional language is tied to its
memory requirements.  We indicate these measures using square
brackets, so that $\m{ctr}[n]$ is a family of types, and
$\m{ctr}[0]$, for example, is a counter with value 0.  Such type
refinements have been considered in the literature on session types
(see, for example, \cite{Griffith13nasa}) with respect to
type-checking and inference.  Here, we treat it as a meta-level
notation to denote families of types.  Following the reasoning
above, we obtain the following type:
\begin{eqnarray*}
\m{bits} &=& \ichoice{\m{zero}^0 : \m{bits}, \m{one}^0 : \m{bits}, \m{\$}^0 : \one^0} \\
\m{ctr}[n] &=& \echoice{\m{inc}^1 : \m{ctr}[n+1], \m{val}^{2\ceil{\log(n+1)}+2} : \m{bits}}
\end{eqnarray*}
To check the types of our implementation, we need to revisit
and refine the typing of the $b0$, $b1$ and $e$ processes.
\[
\begin{array}{lcl}
  t : \m{ctr}[n] & \entailpot{0} & b0 :: (s : \m{ctr}[2n]) \\
  t : \m{ctr}[n] & \entailpot{1} & b1 :: (s : \m{ctr}[2n+1]) \\
  & \entailpot{0} & e :: (s : \m{ctr}[0])
\end{array}
\]
Our type system verifies these types against the
implementation of $b0$, $b1$, and $e$ (see Section~\ref{sec:language}).
The typing rules reduce the well-typedness of these processes
to arithmetic inequalities which we can solve by hand, for
example, using that $\log(2n) = \log(n)+1$.
The intrinsic measure $n$ and
the particular potential annotations are not automatically derived but
come from our insight about the nature of the algorithms and programs.
% They are tied to the reality of computation in the sense that,
% ultimately, the potential must be sufficient for all the messages sent
% during a computation.

Before introducing the formalism in which the programs are expressed,
together with the typing rules that let us perform rigorous amortized
analysis of the code (as expressed in the soundness theorem in
Section~\ref{sec:soundness}), we again emphasize the \emph{compositional
  nature} of the way resource bounds are expressed in the types
themselves and in the typing judgments for process expressions.  Of
course, they reveal some intensional property of the implementations,
namely a bound on its cost, so different implementations of the same
plain session type may have different resource annotations.

The typing derivation provides a proof certificate on the resource
bound for a process.  For closed processes, usually typed as
\[
\cdot \entailpot{p} Q :: (c : \one)
\]
the number $p$ provides a worst case bound for the number of messages
sent during the computation of $Q$, which always ends with the
process sending $\mi{end}$ along $c$.

%%% Local Variables:
%%% mode: latex
%%% mode: flyspell
%%% TeX-master: "main.tex"
%%% End:

\vspace{-5pt}

\section{\langname{}} \label{sec:language}

We briefly introduce the linear, process-only fragment of
SILL~\cite{Toninho13esop,Pfenning15FOSSACS}, which integrates
functional and concurrent computation.  A program in SILL is a
collection of processes exchanging messages through channels.  A new
process is \emph{spawned} by invoking a process definition, which
also creates a fresh channel that is \emph{provided} by the new
process.  The process that invokes a process definition becomes the
\emph{client} of the new process, communicating with it according to
the session types of the channel.  The exacting nature of linear
typing provides strong guarantees, including session fidelity (a form
of preservation) and absence of deadlocks (a form of progress).

We present an overview of the session types in SILL with a brief
description of their communication protocol.  They follow the type
grammar below.
\[
\begin{array}{rcl}
S, T & ::= & V \mid \ichoice{l_i : S} \mid \echoice{l_i : S}
\mid S \lolli T \mid S \tensor T \mid \one
\end{array}
\]
$V$ denotes a type variable here.  Types may be defined mutually
recursively in a global signature, where type definitions are
constrained to be \emph{contractive}~\cite{Gay05acta}.  This allows us
to treat them equi-recursively, which means we can silently replace a
type variable by its definition for the purpose of type-checking.

Internal choice $S \oplus T$ and external choice $S \with T$ have been
generalized to n-ary labeled sums $\ichoice{l_i : S_i}_{i \in I}$ and
$\echoice{l_i : S_i}_{i \in I}$ (for some index set $I$)
respectively. As a provider of internal choice
$\ichoice{l_i : S_i}_{i \in I}$, a process can send one of the labels
$l_i$ to its client. As a dual, a provider of external choice
$\echoice{l_i : S_i}_{i \in I}$ receives one of the labels $l_i$ which
is sent by its client.
We require
external and internal choice to comprise at least one label,
otherwise there would exist a linear channel without
observable communication along it, which is computationally
uninteresting and would complicate our proofs.
The connectives $\tensor$ and $\lolli$ are used to send
channels via other channels. A provider of $S \tensor T$
sends a channel of type $S$ to its client and then
behaves as a provider of $T$. A provider
of $S \lolli T$ receives a channel of type $S$ from its
client. The type of the provider and client changes
consistently, and the process terms of a provider
and client come in matching pairs.
\begin{table}[t!]
\setlength{\tabcolsep}{4pt}
\begin{tabular}{| l l l l l |}
\hline
\textbf{Type} & \textbf{Contin-} & \textbf{Process Term} & \textbf{Contin-} & \multicolumn{1}{c|}{\textbf{Description}} \\
\textbf{(current)} & \textbf{uation} & \textbf{(current)} & \textbf{uation} & \\
\hline
$c : \ichoice{\pot{l_i}{\ignore{q_i}} : S_i}$ & $c : S_k$ & $\esendl{c}{l_k} \semi P$
& $P$ & \hspace{-1em}provider sends label $l_k$ along \\
& & & & \hspace{-1em}$c$ \ignore{with potential $q_k$} \\
& & $\ecase{c}{l_i}{Q_i}_{i \in I}$ & $Q_k$ & \hspace{-1em}client receives label $l_k$ along \\
& & & & \hspace{-1em}$c$ \ignore{with potential $q_k$} \\
\hline
$c : \echoice{\pot{l_i}{\ignore{q_i}} : S_i}$ & $c : S_k$ & $\ecase{c}{l_i}{P_i}_{i \in I}$
& $P_k$ & \hspace{-1em}provider receives label $l_k$ \\
& & & & \hspace{-1em}along $c$ \ignore{with potential $q_k$} \\
& & $\esendl{c}{l_k} \semi Q$ & $Q$ & \hspace{-1em}client sends label $l_k$ along $c$ \\
& & & & \hspace{-1em}\ignore{with potential $q_k$} \\
\hline
$c : S \tensorpot{\ignore{q}} T$ & $c : T$ & $\esendch{c}{w} \semi P$
& $P$ & \hspace{-1em}provider sends channel $w : S$ \\
& & & & \hspace{-1em}along $c$ \ignore{with potential $q$} \\
& & $\erecvch{c}{y} \semi Q_y$ & $\hspace{-1em}[w/y]Q_y$ & \hspace{-1em}client receives channel $w : S$ \\
& & & & \hspace{-1em}along $c$ \ignore{with potential $q$} \\
\hline
$c : S \lollipot{\ignore{q}} T$ & $c : T$ & $\erecvch{c}{y} \semi P_y$
& $\hspace{-1em}[w/y]P_y$ & \hspace{-1em}provider receives channel $w :$\\
& & & & \hspace{-1em}$S$ along $c$ \ignore{with potential $q$} \\
& & $\esendch{c}{w} \semi Q$ & $Q$ & \hspace{-1em}client sends channel $w : S$ \\
& & & & \hspace{-1em}along $c$ \ignore{with potential $q$} \\
\hline
$c : \pot{\one}{\ignore{q}}$ & $-$ & $\eclose{c}$
& $-$ & \hspace{-1em}provider sends $\mi{end}$ along $c$ \\
& & & & \hspace{-1em}\ignore{with potential $q$} \\
& & $\ewait{c} \semi Q$ & $Q$ & \hspace{-1em}client receives $\mi{end}$ along $c$ \\
& & & & \hspace{-1em}\ignore{with potential $q$} \\
\hline
\end{tabular}
\caption{Linear resource-aware session types}
\vspace{-2.8em}
\label{tab:language}
\end{table}

Formally, the syntax of process expressions of \langname{} is same as
in SILL.
\[
\begin{array}{rcl}
P, Q & ::= & \ecut{x}{\mathcal{X}}{y}{Q}
\mid \fwd{x}{y}
\mid \esendl{x}{l_k} \semi P
\mid \ecase{x}{l_i}{P} \\
& & \mid \esendch{x}{w}
\mid \erecvch{x}{y} \semi P
\mid \eclose{x}
\mid \ewait{x} \semi P
\end{array}
\]
The first term $\ecut{x}{\mathcal{X}}{y}{Q}$ invokes a process
definition $\mathcal{X}$ to spawn a new process $P$, which uses the
channels in $\overline{y}$ as a client and provides service along a
fresh channel substituted for $x$ in $Q$.
A forwarding process $\fwd{x}{y}$ (which provides channel $x$)
identifies channels $x$ and $y$ and terminates. The effect is that
clients of $x$ will afterwards communicate along $y$.  We saw an
example of its use in Figure~\ref{fig:binary-counter-val}.
The rest of the program constructs concern communication between two
processes and are guided by their corresponding session type.  Table
\ref{tab:language} provides an overview of session types, associated
process terms, and their operational description (ignore the portions
in red).  For each connective in Table~\ref{tab:language}, the first
line provides the perspective of the provider, while the second line
provides that of the client. The first two columns present the type of
the channel before (\textbf{current}) and after
(\textbf{continuation}) the interaction. Similarly, the next two
columns present the process terms before and after the
interaction. Finally, the last column presents the operational
description.

We conclude by illustrating the syntax, types and semantics
of SILL using a simple example. Recall the counter protocol
(ignoring the resource annotations in red):
\begin{center}
\begin{minipage}{3cm}
\begin{tabbing}
$\bits = \ichoice{\pot{\m{zero}}{\ignore{0}} : \bits, \pot{\m{one}}{\ignore{0}} : \bits, \pot{\m{\$}}{\ignore{0}} : \pot{\one}{\ignore{0}}}$ \\
$\ctr\ignore{[n]} = \echoice{\pot{\m{inc}}{\ignore{1}} : \ctr\ignore{[n+1]}, \pot{\m{val}}{\ignore{2\ceil{\log(n+1)}+2}} : \bits}$
\end{tabbing}
\end{minipage}
\end{center}
The type prescribes that a process providing service along
a channel of type $\ctr$ will either receive an $\m{inc}$
or a $\m{val}$ label. If it receives an $\m{inc}$ label,
the channel will recurse back to the $\ctr$ type.
If it receives a $\m{val}$ label, it will continue
by providing $\m{bits}$, sending a sequence of labels $\m{zero}$
and $\m{one}$ closed out with $\m{\$}$ and $\mi{end}$.

We present implementations of the $b0$, $b1$ and $e$ processes
respectively that were analyzed in Section~\ref{sec:overview} in
Figure~\ref{fig:bincount_impl}.  In the comments we show the types of
the channels after the interaction on each line (again ignoring the
annotations in red).
\begin{figure}[t]
\begin{ntabbing}
\reset
$(t : \ctr\ignore{[n]}) \entailpot{\ignore{0}} b0 :: (s : \ctr\ignore{[2n]})$ \label{valproc:b0_type}\\
 \quad \= $\procdef{b0}{t}{s} =$ \label{valproc:b0_def} \\
 \> \qquad $\m{case}\; s$
\= $(\m{inc} \Rightarrow$ \= $\procdef{b1}{t}{s}$
\hspace{1em} \= $\%\quad (t : \ctr\ignore{[n]}) \entailpot{\ignore{1}} s : \ctr\ignore{[2n+1]}$ \label{valproc:b0_1} \\
\>\> $ \mid \m{val} \Rightarrow$ \> $\esendl{s}{\m{zero}} \semi$
\> $\% \quad (t : \ctr\ignore{[n]}) \entailpot{\ignore{2\ceil{\log(2n+1)}+2-1}} s : \bits$ \label{valproc:b0_2} \\
\>\>\> $\esendl{t}{\m{val}} \semi$
\> $\% \quad (t : \bits) \entailpot{\ignore{2\ceil{\log(2n+1)} + 1 - 2\ceil{\log(n+1)} - 3}} s : \bits$ \label{valproc:b0_3} \\
\>\>\> $\fwd{s}{t})$
\> $\% \quad (t : \bits) \entailpot{\ignore{0}} s : \bits$ \label{valproc:b0_4}
\end{ntabbing}

\begin{ntabbing}
$(t : \ctr\ignore{[n]}) \entailpot{\ignore{1}} b1 :: (s : \ctr\ignore{[2n+1]})$ \label{valproc:b1_type}\\
 \quad \= $\procdef{b1}{t}{s} =$ \label{valproc:b1_def} \\
 \> \qquad $\m{case}\; s$
\= $(\m{inc} \Rightarrow$ \= $\esendl{t}{\m{inc}} \semi$
\hspace{3.5em} \= $\%\quad (t : \ctr\ignore{[n+1]}) \entailpot{\ignore{0}} s : \ctr\ignore{[2n+2]}$ \label{valproc:b1_1} \\
\>\>\> $\procdef{b0}{t}{s}$
\> $\%$ \label{valproc:b1_2}\\
\>\> $ \mid \m{val} \Rightarrow$ \> $\esendl{s}{\m{one}} \semi$
\> $\% \quad (t : \ctr\ignore{[n]}) \entailpot{\ignore{2\ceil{\log(2n+2)}+2-1}} s : \bits$ \label{valproc:b1_3} \\
\>\>\> $\esendl{t}{\m{val}} \semi$
\> $\% \quad (t : \bits) \entailpot{\ignore{2\ceil{\log(2n+2)} + 1 - 2\ceil{\log(n+1)} - 3}} s : \bits$ \label{valproc:b1_4} \\
\>\>\> $\fwd{s}{t})$
\> $\% \quad (t : \bits) \entailpot{\ignore{0}} s : \bits$ \label{valproc:b1_5}
\end{ntabbing}

\begin{ntabbing}
$\cdot \entailpot{\ignore{0}} e :: (s : \ctr\ignore{[0]})$ \label{valproc:end_type}\\
 \quad \= $\procdefna{e}{s} =$ \label{valproc:end_def}\\
 \> \qquad $\m{case}\; s$
\= $(\m{inc} \Rightarrow$ \= $\procdefna{e}{t} \semi$
\hspace{6em} \= $\%\quad (t : \ctr\ignore{[0]}) \entailpot{\ignore{1}} (s : \ctr\ignore{[1]})$ \label{valproc:end_1}\\
\> \> \> $\procdef{b1}{t}{s}$ \label{valproc:end_2}\\
\>\> $\mid \m{val} \Rightarrow$
\> $\esendl{s}{e} \semi$
\> $\% \quad \cdot \entailpot{\ignore{2\ceil{\log(0+1)} + 2 - 1}} s : \pot{\one}{0}$ \label{valproc:end_3} \\
\>\>\> $\eclose{s})$ \label{valproc:end_4}
\end{ntabbing}
\caption{Implementations for the $b0$, $b1$ and $e$ processes
with their type derivations demonstrating the binary counter.}
\vspace{-1.5em}
\label{fig:bincount_impl}
\end{figure}
Since the $b0$ process provides and external choice along
$s$, $b0$ needs to branch based on the label received
(line~\ref{valproc:b0_1}). If it receives the label $\m{inc}$,
the type of the channel updates to $\ctr$, as indicated
on the typing in the comment. At this point, we
can spawn the $b1$ process since the type on line
\ref{valproc:b0_1} matches with the type of the $b1$ process
(line~\ref{valproc:b1_type}). If instead $b0$ receives
the $\m{val}$ label along $s$, it continues at type $\m{bits}$.
It sends $\m{zero}$ (since the lowest bit is indeed zero).
It then requests the value of the higher bits by sending
$\m{val}$ along channel $t$.  Now both $s$ and $t$ have
type $\m{bits}$ (indicated in the typing on line
\ref{valproc:b0_2}) and $b0$ can terminate by forwarding
further communication along $s$ to $t$.

The $b1$ process operates similarly, taking care to handle the carry
upon increment by sending an $\m{inc}$ label along $t$.  The $e$
process spawns a new $e$ process and continues as $b1$ upon receiving
the label $\m{inc}$ and closes the channel after sending $\$$ when
receiving $\m{val}$.

%%% Local Variables:
%%% mode: latex
%%% mode: flyspell
%%% TeX-master: "main"
%%% End:

\vspace{-5pt}

\section{Cost Semantics}\label{sec:cost}

We present an operational cost semantics for \langname{}
that tracks the total work performed by a system.
Like previous work, our semantics is a substructural operational
semantics~\cite{Pfenning09LICS} based on multiset
rewriting~\cite{Cervesato06RSP} and asynchronous
communication~\cite{Pfenning15FOSSACS}.
It can be seen as a combination of an asynchronous version of a
recently introduced synchronous session-type
semantics~\cite{Balzer17ICFP} with the cost tracking semantics of
Concurrent C0~\cite{SilvaFP17}.  The technical advantage of our
semantics is that it avoids the complex operational artifacts of Silva
et al.~\cite{SilvaFP17} such as message buffers: processes and
messages can be typed with exactly the same typing rules, changing
only the cost metric.

We will only count communication costs, ignoring internal.  To this
end, we introduce $3$ costs, $\mlab$, $\mchan$ and $\mcl$, for labels,
channels, and close messages, respectively. A concrete semantics can
be obtained by setting appropriate values for each of those
metrics. For instance, setting $\mlab = \mchan = \mcl = 1$ will lead
to counting the total number of messages exchanged.

Our cost semantics is asynchronous, that is, processes can continue
their evaluation without wait after sending a message. In order to
guarantee session fidelity the semantics must ensure that messages are
received in the order they are sent.  Intuitively, we can think of
channels as FIFO message buffers, although we will formally define
them differently.  Synchronous communication can be implemented in our
language in a type-safe, logically motivated manner exploiting adjoint
shift operators (see~\cite{Pfenning15FOSSACS}).

% In
% our asynchronous model, communication occurs as follows: to send a
% message $m$
% along channel $c$,
% a process enqueues $m$
% in the message buffer of $c$
% and continues execution. Since the session types are binary, there is
% only one receiver process reading messages on channel $c$.
% This receiver will consume $m$
% when it expects to read a message and $m$
% is at the head of the buffer.  % Since the sender does not wait for the
% receiver to consume $m$,
% this establishes the asynchronous nature of communication.

A collection of communicating process is called a
\emph{configuration}.  A configuration is formally modelled as a
multiset of propositions $\proc{c}{w, P}$ and $\msg{c}{w, M}$.  The
predicate $\proc{c}{w, P}$ describes a process executing process
expression $P$ and providing channel $c$.  The predicate
$\msg{c}{w, M}$ describes the message $M$ on channel $c$.  In order to
guarantee that messages are received in they order they are sent, only
\emph{a single message} can be on a given channel $c$.  In order for
computation to remain truly asynchronous, every send operation (except
for $\m{close}$) on a channel $c$ creates not only a fresh message,
but also a fresh continuation channel $c'$ for the next message.  This
continuation channel is encoded within the message via a forwarding
operation.  Remarkably, this simple device allows us to assign session
types to messages just as if they were processes!  Since $M$ need only
encode a message, it has a restricted grammar.
\[
\begin{array}{lcl}
M & ::= & \fwd{c}{c'} \mid \esendl{c}{l_k} \semi \fwd{c}{c'}
\mid \esendl{c}{l_k} \semi \fwd{c'}{c} \\
& & \mid \esendch{c}{e} \semi \fwd{c}{c'} \mid
\esendch{c}{e} \semi \fwd{c'}{c} \mid \eclose{c}
\end{array}
\]
% To send a message, a process creates a new $\m{msg}$
% proposition, which is then consumed by the receiver.
% %
% The receiving process and the message process communicated
% synchronously when exchanging the message.
% %
% To remain consistent with the underlying session types, we have to
% maintain the order at which messages arrive at a channel. To this end,
% we create a unique channel name for every message we send and ensure
% that channels are chained through \emph{forwarding}.

The work is tracked by the local counter $w$
in the $\proc{c}{w, P}$
and $\msg{c}{w, M}$
propositions.  For a process $P$,
$w$
maintains the total work performed by $P$
so far.  When a process sends a message (i.e. creates a new $\m{msg}$
predicate), we increment its counter $w$
by the cost for sending. When a processes terminates we remove the
respective predicate from the configuration but need to preserve
the work done by the process. A process can terminate
either by sending a $\m{close}$
message, or by forwarding.  In either case, we can
conveniently preserve
the process' work in the $\m{msg}$
predicate to pass it on to the client process.

\begin{figure}[t!]
\[
\infer[\m{spawn}_c]
{\proc{c}{0, [c/x,\overline{e}/y]P_{x,\overline{y}}} \qquad \proc{d}{w, [c/x]Q_x}}
{\Sg(\mathcal{X}) = \espawn{x}{\mathcal{X}}{\overline{y}}{P_{x,\overline{y}}}
\qquad \proc{d}{w, \ecut{x}{\mathcal{X}}{e}{Q_x}} \qquad \fresh{c}}
\]
\[
\infer[\m{fwd}_s]
{\msg{c}{w, \fwd{c}{d}}}
{\proc{c}{w, \fwd{c}{d}}}
\]
\[
\infer[\m{fwd}^+_r]
{\proc{c}{w + w', [c/d]P}}
{\proc{d}{w, P} \qquad \msg{c}{w', \fwd{c}{d}}}
\]
\[
\infer[\m{fwd}^-_r]
{\proc{e}{w + w', [d/c]P_c}}
{\proc{e}{w, P_c} \qquad \msg{c}{w', \fwd{c}{d}}}
\]
\[
\infer[\oplus C_s]
{\proc{c'}{w+\mlab, [c'/c]P} \qquad \msg{c}{0, \esendl{c}{l_k} \semi \fwd{c}{c'}}}
{\proc{c}{w, \esendl{c}{l_k} \semi P} \qquad \fresh{c'}}
\]
\[
\infer[\oplus C_r]
{\proc{d}{w+w', [c'/c]Q_k}}
{\msg{c}{w, \esendl{c}{l_k} \semi \fwd{c}{c'}} \qquad \proc{d}{w', \ecase{c}{l_i}{Q_i}_{i \in I}}}
\]
%\[
%\infer[\with C_s]
%{\proc{d}{w+\mlab, [c'/c]P} \qquad \msg{c'}{0, \esendl{c}{l_k} \semi \fwd{c'}{c}}}
%{\proc{d}{w, \esendl{c}{l_k} \semi P} \qquad \fresh{c'}}
%\]
%\[
%\infer[\with C_r]
%{\proc{c}{w+w', [c'/c]Q_k}}
%{\msg{c'}{w, \esendl{c}{l_k} \semi \fwd{c'}{c}} \qquad \proc{c}{w', \ecase{c}{l_i}{Q_i}_{i \in I}}}
%\]
\[
\infer[\tensor C_s]
{\proc{c'}{w+\mchan, [c'/c]P} \qquad \msg{c}{0, \esendch{c}{e} \semi \fwd{c}{c'}}}
{\proc{c}{w, \esendch{c}{e} \semi P} \qquad \fresh{c'}}
\]
\[
\infer[\tensor C_r]
{\proc{d}{w+w', [c'/c]Q_e}}
{\msg{c}{w, \esendch{c}{e} \semi \fwd{c}{c'}} \qquad \proc{d}{w', \erecvch{c}{x} \semi Q_x}}
\]
\[
\infer[\lolli C_s]
{\proc{d}{w+\mchan, [c'/c]P} \qquad \msg{c'}{0, \esendch{c}{e} \semi \fwd{c'}{c}}}
{\proc{d}{w, \esendch{c}{e} \semi P} \qquad \fresh{c'}}
\]
\[
\infer[\lolli C_r]
{\proc{c}{w+w', [c'/c]Q_e}}
{\msg{c'}{w, \esendch{c}{e} \semi \fwd{c'}{c}} \qquad \proc{c}{w', \erecvch{c}{x} \semi Q_x}}
\]
\[
\infer[\one C_s]
{\msg{c}{w+\mcl, \eclose{c}}}
{\proc{c}{w, \eclose{c}}}
\]
\[
\infer[\one C_r]
{\proc{d}{w + w', Q}}
{\msg{c}{w, \eclose{c}} \qquad \proc{d}{w', \ewait{c} \semi Q}}
\]
\caption{Cost semantics tracking total work for programs in SILL}
\vspace{-1.5em}
\label{fig:cost_semantics}
\end{figure}

The semantics is defined by a set of rules rewriting the configuration that
\emph{consume} the proposition in the premise of the rule and \emph{produce} the
propositions in the conclusion (rules should be read top-down!). A step
consists of non-deterministic application of a rule whose premises
matches a part of the configuration. Consider for instance the rule
$C_s$
that describes a client that sends label $l_k$ along channel $c$.
\[
\infer[\with C_s]
{\proc{d}{w+\mlab, [c'/c]P} \qquad \msg{c'}{0, \esendl{c}{l_k} \semi \fwd{c'}{c}}}
{\proc{d}{w, \esendl{c}{l_k} \semi P} \qquad \fresh{c'}}
\]
The rule can be applied to every proposition of the form
$\proc{d}{w, \esendl{c}{l_k} \semi P}$.
When applying the rule, we generate a fresh channel
continuation channel $c'$ and replace the premise by
propositions $\proc{d}{w+\mlab, [c'/c]P}$
and $\msg{c'}{0, \esendl{c}{l_k} \semi \fwd{c'}{c}}$.
The message predicate contains the process $\esendl{c}{l_k} \semi \fwd{c'}{c}$
which will eventually deliver the message to the provider along $c$
and will continue communication along $c'$ (which is achieved by $\fwd{c'}{c}$).
The work of the process is incremented by $\mlab$
to account for the sent message, while the work of the message is $0$.

Conversely, the rule $\with C_r$
defines how a provider receives a label $l_k$
along $c$.
\[
\infer[\with C_r]
{\proc{c}{w+w', [c'/c]Q_k}}
{\msg{c'}{w, \esendl{c}{l_k} \semi \fwd{c'}{c}} \qquad \proc{c}{w', \ecase{c}{l_i}{Q_i}_{i \in I}}}
\]
The rule replaces the $\m{msg}$
and $\m{proc}$ propositions in the configuration that
match the premises, with the single $\m{proc}$
proposition in the conclusion.  Since the provider receives the label $l_k$,
it continues as $Q_k$.
However, we replace $c$
with $c'$
in $Q_k$
since the forwarding $\fwd{c'}{c}$
in the message process tells us that the next message will arrive on
channel $c'$.
If there is any work $w$ encoded in the message, it is
transferred to the recipient.  This is somewhat more general
than necessary for this particular rule, since in the
current system the work $w$ in a label-sending message $c.l_k$
will always be 0.

The rest of the rules of cost semantics are given in
Figure~\ref{fig:cost_semantics}.  The rule $\m{spawn}_c$
describes the creation of a new channel $c$
along with a spawning new process $\mathcal{X}$
implemented by $P_c$. This implementation is looked
up in a signature for the semantics $\Sg$ (maps process
names to the implementation code).
The new process is spawned with $0$
work (as it has not sent any messages so far), while $Q_c$
continues with the same amount of work.
In the rule $\m{fwd}_s$ a forwarding process creates a
\emph{forwarding message} and terminates. The work carried by this
special message is the same as the work done by the process, now
defunct.  A forwarding message form does not carry any real
information (except for the work $w$!); it just serves to identify the
two channels $c$ and $d$.  In an implementation this could be as
simple as concatenating two message buffers. We therefore do not count
forwarding messages when computing the work. Another reason forward
messages are special is that unlike all other forms of messages, they
are neither prescribed by nor manifest in a channel's type.
In our formal rules, the forwarding message can be absorbed
either into the client ($\m{fwd}^+_r$) or provider ($\m{fwd}^-_r$),
in both cases preserving the total amount of work.
% %
%  $\m{fwd}^+_r$ : The same message is consumed by a process containing
% $d$ in its context, and a channel substitution is performed along with transferring
% the work from the message to the receiving process, conserving the total amount
% of work done by the process.

%\item $\tensor C_s$ and $\tensor C_r$ : This is similar to the $\oplus C$ rules,
%except we are sending a channel instead of a message. In the $\tensor C_s$
%rule, a process creates a message to send a channel, and continues execution
%with increasing its work done by $\mchan$. This message is consumed by a
%process in $\tensor C_r$ rule and its work performed remains the same.
%The rules for $\lolli C$ are analogous.
%
%\item $\one C_s$ : A process which is about to close just creates a closing
%message and increments the work done by it by $\mcl$.
%
%\item $\one C_r$ : A process waiting for the close message consumes it
%and continues execution.

% An interesting aspect of our cost semantics is that it is completely
% an extension of the usual small step semantics of SILL. Deleting the
% local counter $w$ will recover the standard asynchronous semantics
% of simple session types. 
The rules of the cost semantics are successively applied to a
configuration until the configuration becomes empty or the
configuration is stuck and none of the rules can be applied. At any
point in this local stepping, the total work performed by the system
can be obtained by summing the local counters $w$
for each predicate in the configuration. We will prove in
Section~\ref{sec:soundness} that this total work can be upper bounded
by the initial potential of the configuration that is typed in our
resource-aware type system.

%%% Local Variables:
%%% mode: latex
%%% mode: flyspell
%%% TeX-master: "main"
%%% End:

\vspace{-5pt}

\section{Type System}\label{sec:typing}
We now present the resource-aware type system of
our language which extends the linear-only
fragment of SILL~\cite{Toninho13esop,Pfenning15FOSSACS}
with resource annotations.  It is in turn
based on intuitionistic linear logic~\cite{GIRARD1987LinLog}
with sequents of the form
% The type system is an extension
% to that of SILL and has a deep connection with
% linear logic.  We start with introducing the type
% system for the linear fragment of SILL.
% Linear logic~ is a refinement
% of intuitionistic logic which disallows the use of structural
% rules of weakening and contraction. As a result, purely
% linear propositions can be viewed as resources that
% must be used exactly once in a proof, leading to the
% sequent
\[
A_1, A_2, \ldots, A_n \vdash A
\]
where $A_1, \ldots A_n$ are the linear antecedents and
$A$ is the succedent. Under the Curry-Howard isomorphism
for intuitionistic linear logic, propositions are related to
session types, proofs to processes, and cut reduction in
proofs to communication. Appealing to this
correspondence, we assign a process term $P$
to the above judgment and label each hypothesis as
well as the conclusion with a channel.
\[
(x_1 : A_1), (x_2 : A_2), \ldots, (x_n : A_n) \vdash P :: (x : A)
\]
The resulting judgment states that process $P$ provides
a service of session type $A$ along channel $x$, using
the services of session types $A_1, \ldots, A_n$
provided along channels $x_1, \ldots, x_n$ respectively.
The assignment of a channel to the conclusion is convenient
because, unlike functions, processes do not evaluate to
a value but continue to communicate along their providing
channel once they have been created. For the judgment to
be well-formed, all the channel names have to be distinct.
Whether a session type is used or provided is determined
by its positioning to the left or right, respectively, of the
turnstile.

\begin{figure}[t!]
% \[
% \infer[\m{id}]
% {\Sg \semi y : S \entailpot{q} \fwd{x}{y} :: (x : S)}
% {q \geq 0}
% \]
% \[
% \sinfer{\m{spawn}}
% {\Sg \semi \W_1 \; \W_2 \entailpot{r} (\ecut{x}{\mathcal{X}}{y}{Q_x}) :: (z : U)}
% {r \geq p + q \quad \espawn{x' : S}{\mathcal{X} \fpot p}{\overline{y' : W}}{P_{x', \overline{y'}}} \in \Sg \\
% \W_1 = \overline{y : W} \quad \Sg \semi \W_2 \; (x : S) \entailpot{q} Q_x :: (z : U)}
% \]
\[
\infer[\oplus R_k]
{\Sg \semi \W \entailpot{q} (\esendl{x}{l_k} \semi P) :: (x : \ichoice{\pot{l_i}{r_i} : S_i}_{i \in I})}
{q \geq p + r_k + \mlab \quad \Sg \semi \W \entailpot{p} P :: (x : S_k) \quad (k \in I)}
\]

\[
\infer[\oplus L]
{\Sg \semi \W \; (x : \ichoice{\pot{l_i}{r_i} : S_i}_{i \in I}) \entailpot{q} \ecase{x}{l_i}{Q_i}_{i \in I} :: (z : U)}
{q + r_i \geq q_i \quad \Sg \semi \W \; (x : S_i) \entailpot{q_i} Q_i :: (z : U) \quad (\forall i \in I)}
\]

% \[
% \infer[\with R]
% {\Sg \semi \W \entailpot{q} \ecase{x}{l_i}{P_i}_{i \in I} :: (x : \echoice{\pot{l_i}{r_i} : S_i})}
% {q + r_i \geq q_i \quad \Sg \semi \W \entailpot{q_i} P_i :: (x : S_i) \quad (\forall i \in I)}
% \]
% \[
% \infer[\with L_k]
% {\Sg \semi \W \; (x : \echoice{\pot{l_i}{r_i} : S_i}) \entailpot{q} \esendl{x}{l_k} \semi Q :: (z : U)}
% {q \geq p + r_k + \mlab \quad \Sg \semi \W \; (x : S_k) \entailpot{p} Q :: (z : U)}
% \]

\[
\infer[\lolli R]
{\Sg \semi \W \entailpot{q} (\erecvch{x}{y} \semi P_y) :: (x : S \lollipot{r} T)}
{q + r \geq p \quad \Sg \semi \W \; (y : S) \entailpot{p} P_y :: (x : T)}
\]

\[
\infer[\lolli L]
{\Sg \semi \W \; (w : S) \; (x : S \lollipot{r} T) \entailpot{q} (\esendch{x}{w} \semi Q) :: (z : U)}
{q \geq p + r + \mchan \quad \Sg \semi \W \; (x : T) \entailpot{p} Q :: (z : U)}
\]

\[
\infer[\tensor R]
{\Sg \semi (w : S) \; \W \entailpot{q} \esendch{x}{w} \semi P :: (x : S \tensorpot{r} T)}
{q \geq p + r + \mchan \quad \Sg \semi \W \entailpot{p} P :: (x : T)}
\]

\[
\infer[\tensor L]
{\Sg \semi \W \; (x : S \tensorpot{r} T) \entailpot{q} \erecvch{x}{y} \semi Q_y :: (z : U)}
{q + r \geq p \quad \Sg \semi \W \; (y : S) \; (x : T) \entailpot{p} Q_y :: (z : U)}
\]

\[
\infer[\one R]
{\Sg \semi \cdot \entailpot{q} \eclose{x} :: (x : \pot{\one}{r})}
{q \geq r + \mcl}
\qquad
\infer[\one L]
{\Sg \semi \W \; (x : \pot{\one}{r}) \entailpot{q} \ewait{x} \semi Q :: (z : U)}
{q + r \geq p \quad \Sg \semi \W \entailpot{p} Q :: (z : U)}
\]

\[
\infer[\mu R]
{\Sg \semi \W \entailpot{q} P :: (x : V)}
{(V = S_V) \in \Sg \quad \Sg \semi \W \entailpot{q} P :: (x : S_V)}
\]

\[
\infer[\mu L]
{\Sg \semi \W \; (x : V) \entailpot{q} P :: (z : U)}
{(V = S_V) \in \Sg \quad \Sg \semi \W \; (x : S_V) \entailpot{q} P :: (z : U)}
\]
\caption{Typing rules for session-typed programs (remaining rules are given in the text)}
\vspace{-1.5em}
\label{fig:type_rules}
\end{figure}

Resource-aware session types are given by the following grammar.
\[
\begin{array}{rcl}
S, T & ::= & V \mid \ichoice{\pot{l_i}{q_i} : S} \mid \echoice{\pot{l_i}{q_i} : S}
\mid S \lollipot{q} T \mid S \tensorpot{q} T \mid \pot{\one}{q}
\end{array}
\]
Here, $V$ is a type variable.
The meaning of the types and the process terms associated
with it are defined in Table~\ref{tab:language} (annotations
and descriptions pertaining to potentials are marked in red).

The typing judgment of \langname{} has the form
$$\Sg ; \W \entailpot{q} P :: (x : S) \;.$$
Intuitively, the judgment describes a process in state
$P$ using the context $\W$ and signature $\Sg$
and providing service along channel
$x$ of type $S$. In other words, $P$ is the provider for
channel $x : S$, and a client for all the channels in $\W$.
The resource annotation $q$ is a natural number
and defines the potential stored in the process $P$.
$\Sg$ defines the signature containing type and process
definitions. These definitions are needed to typecheck processes
which refer to a type definition or spawn a new process.

The signature $\Sg$ is defined as a possibly infinite set of
type definitions $V = S_V$ and process definitions
$\espawn{x : S}{\mathcal{X} \fpot q}{\overline{y : W}}{P_{x, \overline{y}}}$.
The equation $V = S_V$ is used to define the type variable $V$ as $S_V$.
We treat such definitions \emph{equirecursively}.  For instance,
$\ctr[n] = \echoice{\pot{\m{inc}}{1} : \ctr[n+1], 
\pot{\m{val}}{2\ceil{\log (n+1)}+2} : \bits}$ exists in the
signature for all $n \in \mathbb{N}$ for the binary counter system.
Type families exist only at the meta-level and $\ctr[n]$ is
treated as a regular type variable. 
% Moreover, the types are
% treated as equi-recursive and a type is equivalent to its definition.
The process definition $\espawn{x : S}{\mathcal{X} \fpot q}{\overline{y : W}}{P_{x, \overline{y}}}$
defines a (possibly recursive) process named $\mathcal{X}$ that is implemented by $P_{x, \overline{y}}$
provides along channel $x : S$, and uses the channels $\overline{y : W}$ as a client.
The process also stores a potential $q$, shown as $\mathcal{X} \fpot q$
in the signature. % Intuitively, if a process $P_{x, \overline{y}}$
% is typed as
% $\Sg ; \overline{y : W} \entailpot{q} P_{x, \overline{y}} :: (x : S)$,
% it can be named $\mathcal{X}$ and added to the signature
% making it available for other processes to refer. 
For instance,
for the binary counter system,
$\espawn{s : \ctr[2n]}{b0 \fpot 0}{t : \ctr[n]}{P_{s, t}}$
($P_{s, t}$ defines the implementation of $b0$) exists
in the signature for all $n \in \mathbb{N}$.

Messages are typed differently from processes as
their work counters $w$ (introduced in the predicate
$\msg{c}{w, M}$) are not incremented when they
actually deliver the message to the receiver. Hence,
to type the messages,
we define an auxiliary cost-free typing judgment,
$\Sg ; \W \entailpotcf{q} P :: (x : S)$, which follows
the same typing rules as Figure~\ref{fig:type_rules},
but with $\mlab = \mchan = \mcl = 0$. This avoids
paying the cost for sending a message twice. A
fresh signature $\Sg$ is used in the derivation
of the cost-free judgment.

The idea of the type system is that each message carries potential and
the sending process pays the potential along with the cost of
sending a message from its local potential. The receiving process
receives the potential when it receives the message and adds it to its
local potential.  For example, consider the rule $\with L_k$ for
a client sending a label $l_k$ along channel $x$.
\[
\infer[\with L_k]
{\Sg \semi \W \; (x : \echoice{\pot{l_i}{r_i} : S_i}) \entailpot{q} \esendl{x}{l_k} \semi Q :: (z : U)}
{q \geq p + r_k + \mlab \quad \Sg \semi \W \; (x : S_k) \entailpot{p} Q :: (z : U)}
\]
Since the continuation $Q$
needs potential $p$
to typecheck, and the potential to be sent with the label is $r_k$,
we need a total potential of at least $p + r_k + \mlab$,
where $\mlab$
is the cost of sending a label. Hence, we get the constraint
$q \geq p + r_k + \mlab$.

The rule $\with R$
describes a provider that is awaiting a message on channel $x$
and has local potential $q$ available.
\[
\infer[\with R]
{\Sg \semi \W \entailpot{q} \ecase{x}{l_i}{P_i}_{i \in I} :: (x : \echoice{\pot{l_i}{r_i} : S_i})}
{q + r_i \geq q_i \quad \Sg \semi \W \entailpot{q_i} P_i :: (x : S_i) \quad (\forall i \in I)}
\]
The second premise tells us that the
branch $P_i$ needs potential $q_i$ to typecheck.
But the branch $P_i$ is reached after receiving the
label $l_i$ with potential $r_i$. Hence, the initial potential
$q$ must be able to cover the difference  $q_i - r_i$.
Since potential $q$ can typecheck all the branches, 
we get the constraint $q \geq q_i - r_i$ for all $i$.

To spawn a new process defined by $\mathcal{X}$, we split the context $\W$ into $\W_1 \; \W_2$,
and we use $\W_1$
to type the newly spawned process and $\W_2$ for the continuation $Q_x$.
\[
\sinfer{\m{spawn}}
{\Sg \semi \W_1 \; \W_2 \entailpot{r} (\ecut{x}{\mathcal{X}}{y}{Q_x}) :: (z : U)}
{r \geq p + q \quad \espawn{x' : S}{\mathcal{X} \fpot p}{\overline{y' : W}}{P_{x', \overline{y'}}} \in \Sg \\
\W_1 = \overline{y : W} \quad \Sg \semi \W_2 \; (x : S) \entailpot{q} Q_x :: (z : U)}
\]
If the spawned process needs potential $p$
(indicated by the signature) and the continuation needs potential
$q$
then the whole process needs potential $r \geq p + q$.

A forwarding process $\fwd{x}{y}$ terminates and its potential
$q$ is lost. Since we do not count forwarding messages in our
cost semantics, we don't need any potential to type the forward.
\[
\infer[\m{id}]
{\Sg \semi y : S \entailpot{q} \fwd{x}{y} :: (x : S)}
{q \geq 0}
\]
The rest of the rules are given in Figure~\ref{fig:type_rules}.
They are similar to the discussed rules
and we omit their explanation. 

As an illustration, the resource-aware type for the binary
counter was presented in Section~\ref{sec:language} (marked
in red). Also, Figure~\ref{fig:bincount_impl} provides the type
derivation of the $b0$, $b1$ and $e$ processes (again marked
in red).
The annotations, along with the type derivation,
prove that an
increment has an amortized resource cost of $1$ (potential
annotation of $\m{inc}$ in $\bits$ type) and reading a value has
a resource cost of $2 \ceil{\log (n+1)} + 2$.

% With the type system and cost semantics formulated, we need to connect
% the two and prove that the potential values obtained by our type
% system for a process indeed upper bound the work performed by the
% process according to the cost semantics. We present the theorem with
% proof for an arbitrary system, which forms the basis of our next
% section.

%%% Local Variables:
%%% mode: latex
%%% mode: flyspell
%%% TeX-master: "main.tex"
%%% End:

\vspace{-5pt}

\section{Soundness} \label{sec:soundness}

This section concludes the discussion of \langname{} by proving the
soundness of the resource-aware type system with respect to the cost
semantics.
So far, we have analyzed and type-checked processes in isolation.
However, as our cost semantics indicates, processes always exist in a
configuration, where they interact with other processes. Hence, we
need to extend the typing rules to configurations.

\paragraph{Configuration Typing}
At runtime, a program state in \langname{} is a
set of processes interacting via messages.
Such a set is represented as as a multi-set of $\m{proc}$ and $\m{msg}$
predicates as described in Section~\ref{sec:cost}.
To type the resulting configuration $\config$,
we first need to define a well-formed signature.

A signature $\Sg$ is said to be \emph{well formed} if
every process definition
$\espawn{x : S}{\mathcal{X} \fpot p}{\overline{y : W}}{P_{x, \overline{y}}}$ in $\Sg$
is well typed according to the process typing judgment, i.e.
$\Sg \semi \W \entailpot{p} P_{x, \overline{y}} :: (x : S)$.

We use the following judgment to type a configuration.
\[
\Sg ; \W_1 \potconf{S} \config :: \W_2
\]
It states that $\Sg$ is well-formed
and that the configuration $\config$
uses the channels in the context $\W_1$ and provides
the channels in the context $\W_2$. The natural number $S$ denotes the
sum of the total potential and work done by the system.
We call $S$ the \measure{} of the configuration.

\begin{figure}[t]
\[
\infer[\m{emp}]
{\Sg \semi (\cdot) \potconf{0} (\cdot) :: (\cdot)}
{}
\qquad
\infer[\m{compose}]
{\Sg \semi \W \potconf{S + S'}(\config \; \config') :: \W''}
{\Sg \semi \W \potconf{S} \config :: \W' \qquad \Sg \semi \W' \potconf{S'} \config' :: \W''}
\]
\[
\infer[C_{\m{proc}}]
{\Sg \semi \W \; \W_1 \potconf{p + w}(\proc{x}{w, P}) :: (\W \; (x:A) )}
{\Sg \semi \W_1 \entailpot{p} P :: (x : A)}
\]
\[
\infer[C_{\m{msg}}]
{\Sg \semi \W \; \W_1 \potconf{p + w}(\msg{x}{w, P}) :: (\W \; (x:A) )}
{\Sg \semi \W_1 \entailpotcf{p} P :: (x : A)}
\]
\caption{Typing rules for a configuration}
\vspace{-1.5em}
\label{fig:config_typing}
\end{figure}

The configuration typing judgment is defined using
the rules presented in Figure~\ref{fig:config_typing}.
The rule $\m{emp}$  defines that an empty configuration
is well-typed with \measure{} $0$. The rule $\m{compose}$
composes two
configurations $\config$ and $\config'$: $\config$ provides
service on the channels in $\W'$ while $\config'$ uses
the channels in $\W'$. The \measure{} of the composed
configuration $\config \; \config'$ is obtained by summing
up their individual \measure{}s. The rule $C_{\m{proc}}$
creates a configuration out of a single process. The
\measure{} of this singleton configuration is obtained by
adding the potential of the process and the work performed
by it. Similarly, the rule $C_{\m{msg}}$  creates a configuration
out of a single message. Since we account for the cost while
typing the processes (see Figure~\ref{fig:type_rules}), using
the same judgment to type the processes would lead to
paying for the same message twice. Hence, the messages
are typed in a cost-free judgment where $\mlab = \mchan =
\mcl = 0$.

\paragraph{Soundness}
Theorem~\ref{thm:soundness} is the main theorem of the paper. It is a
stronger version of a classical type preservation theorem and the
usual type preservation is a direct consequence. Intuitively, it
states that the \measure{} of a configuration never increases during
an evaluation step. This also implies that the \measure{} of the
initial configuration is an upper bound on the \measure{} of any
configuration it can ever step to. The soundness connects the
potential with the work (i.e. the type system with the cost
semantics).% , loosely stating that the sum of potential
% and work is an invariant of the configuration.

\begin{theorem}[Soundness]\label{thm:soundness}
Consider a well-typed configuration $\config$ w.r.t. a well-formed
signature $\Sg$ such that $\Sg ; \W_1 \potconf{S} \config :: \W_2$.
If $\config \step \config'$, then there exist $\W_1'$, $\W_2'$ and $S'$
such that $\Sg ; \W_1' \potconf{S'} \config' :: \W_2'$ and $S' \leq S$.
\end{theorem}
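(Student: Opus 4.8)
The statement concerns a single step, so the argument is a case analysis on the operational rule applied in $\config \step \config'$, with no induction on the step itself. The plan is to first localize the rewrite to the one or two predicates it touches, then for each rule invert the configuration typing and the underlying process typing to recover the governing arithmetic constraint, rebuild a typing derivation for $\config'$, and check that this constraint forces $S' \le S$. The $\measure$ of a configuration is additive over predicates: each $\proc{x}{w, P}$ contributes (potential of $P$) $+\, w$ and each $\msg{x}{w, M}$ contributes (cost-free potential of $M$) $+\, w$, so it suffices to track how the local sub-$\measure$ of the redex changes.

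First I would prove a framing lemma for configuration typing. Since the rules $C_{\m{proc}}$ and $C_{\m{msg}}$ already carry a pass-through frame $\W$ in both antecedent and succedent, and $\m{compose}$ is associative, any typing $\Sg ; \W_1 \potconf{S} \config :: \W_2$ with $\config$ a multiset union of a frame $\config_0$ and a redex $\config_1$ can be reassociated so that $\config_1$ is typed in isolation as $\Sg ; \W \potconf{S_1} \config_1 :: \W'$ and $\config_0$ contributes a fixed sub-$\measure$ $S_0$ with $S = S_0 + S_1$; linearity (each channel provided once and used once) justifies the reordering. The frame is untouched by the step and retains $S_0$, so it remains only to show the redex rewrites to a configuration of $\measure$ at most $S_1$ over a possibly renamed interface. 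The fresh continuation channel $c'$ created on every send is exactly what forces $\W_1', \W_2'$ to differ from $\W_1, \W_2$ by a renaming.

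Then I would proceed rule by rule. For a sending rule such as $\oplus C_s$, inverting $C_{\m{proc}}$ and then $\oplus R_k$ on the consumed $\proc{c}{w, \esendl{c}{l_k}\semi P}$ yields $q \ge p + r_k + \mlab$, where $q$ is the sender's potential, $p$ the continuation's, and $r_k$ the potential annotated on the label. The produced predicates are re-typed by $C_{\m{proc}}$ (continuation: potential $p$, work $w+\mlab$) and $C_{\m{msg}}$ (the label message: cost-free potential $r_k$, work $0$); here I need a small inversion lemma matching the message form $\esendl{c}{l_k}\semi\fwd{c}{c'}$ to cost-free potential $r_k$. The new sub-$\measure$ is $p + (w+\mlab) + r_k$, and $p + w + \mlab + r_k \le q + w$ is precisely the send constraint, so the $\measure$ is preserved. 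The receiving rules such as $\oplus C_r$ are dual: inverting $\oplus L$ (or $\with R$ on the provider side) gives the branch constraint $q + r_i \ge q_i$, and the $\measure$ inequality reduces to exactly the transfer of the message's potential $r_k$ into the receiver. For $\m{spawn}_c$, I would invoke well-formedness of $\Sg$ to type the spawned body with its declared potential $p$, split the context as in the $\m{spawn}$ rule, and the inequality becomes $r \ge p + q$; the new process starts with work $0$ and so adds no work.

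The forwarding rules are where I expect the main difficulty, for two reasons. First, $\m{fwd}_s$ turns a process into a forwarding message, and by the $\m{id}$ rule the process potential $q$ is only required to be nonnegative and is discarded when the message is typed with potential $0$; this is the one place the $\measure$ can strictly decrease, so I must confirm $S' \le S$ rather than equality. Second, $\m{fwd}^+_r$ and $\m{fwd}^-_r$ identify two channels and absorb the forwarding message into a neighbor while summing the work counters; reconstructing the typing requires rewriting channel names throughout the affected context and checking that the interface contexts $\W_1', \W_2'$ obtained after the identification remain well-formed, with the nonnegative message potential keeping the $\measure$ from rising. Threading these channel renamings consistently through the framing decomposition — rather than any single arithmetic check — is the genuinely delicate part of the argument; the per-rule potential inequalities themselves are immediate once the correct derivation has been inverted.
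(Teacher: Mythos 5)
Your proposal takes essentially the same route as the paper's proof: a case analysis on the stepping rule, isolating the one- or two-predicate redex from the untouched frame via the $\m{compose}$ rule, inverting the configuration and process typing to extract the arithmetic side conditions, and re-typing the resulting predicates (messages in the cost-free judgment, forwarding handled by choosing the new potential at most the old one) to conclude $S' \le S$ rule by rule. The only differences are matters of emphasis — you make the framing/reassociation step and the channel-renaming bookkeeping explicit where the paper treats them implicitly, and your remark that $\m{fwd}_s$ is the sole source of strict decrease is slightly off (slack in any typing inequality permits strict decrease) — but neither affects the correctness or the structure of the argument.
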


The proof of the soundness theorem is achieved
by a case analysis on the cost semantics, followed
by an inversion on the typing of a configuration.
The complete proof is presented in Appendix
\ref{app:soundness}. The preservation theorem
is a corollary of the soundness, since we prove
that the configuration $\config'$ is well-typed.

The soundness implies that the weight of a configuration is an upper
bound on the total work performed by an evaluation starting in that
configuration. We are particularly interested in the special case of a
configuration that starts with $0$
work. In this case, the weight corresponds to the initial potential of
the system.

\begin{corollary}[Upper Bound]
If $\Sg ; \W_1 \potconf{S}
\config :: \W_2$, and $\config \step^* \config'$,
then $S \geq W'$, where $W'$ is the total
work performed by the configuration $\config'$,
i.e. the sum of the work performed by each
process and message in $\config'$. In particular,
if the work done by the initial configuration $\config$ is $0$,
then the potential $P$ of the initial configuration
satisfies $P \geq W'$.
\end{corollary}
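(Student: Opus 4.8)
The plan is to derive the Corollary purely from the Soundness Theorem together with the structure of the configuration typing judgment, treating the single-step soundness as a black box. The key observation is that the \measure{} $S$ in a judgment $\Sg ; \W_1 \potconf{S} \config :: \W_2$ is defined (via rules $C_{\m{proc}}$, $C_{\m{msg}}$, $\m{compose}$, and $\m{emp}$) as a sum over all predicates in $\config$, where each $\m{proc}$ or $\m{msg}$ predicate contributes its local work $w$ \emph{plus} its local potential ($p + w$), and these sum additively under $\m{compose}$. Hence $S$ decomposes as $S = P_{\config} + W_{\config}$, where $W_{\config} = \sum w$ is the total work in $\config$ and $P_{\config} = \sum p \geq 0$ is the total potential. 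Since potential is a natural number and therefore nonnegative, we immediately get $S \geq W_{\config}$ for \emph{any} well-typed configuration.

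First I would make this decomposition precise as a short structural lemma: by induction on the derivation of $\Sg ; \W_1 \potconf{S} \config :: \W_2$, one shows $S = P_{\config} + W_{\config}$ with $P_{\config} \geq 0$. The base case $\m{emp}$ gives $S = 0$ with both summands zero; the $\m{compose}$ case adds the two subconfigurations' measures and the sums split accordingly; the leaf cases $C_{\m{proc}}$ and $C_{\m{msg}}$ record exactly $p + w$ with $p \geq 0$ since potentials range over $\mathbb{N}$. This yields $S \geq W_{\config}$ as a standing invariant.

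Next I would handle the iteration. From $\config \step^* \config'$, there is a finite chain $\config = \config_0 \step \config_1 \step \cdots \step \config_k = \config'$. I proceed by induction on $k$. The hypothesis gives a typing $\Sg ; \W_1 \potconf{S} \config_0 :: \W_2$; applying Theorem~\ref{thm:soundness} at each step produces contexts $\W_1^{(i)}, \W_2^{(i)}$ and measures $S_i$ with $\Sg ; \W_1^{(i)} \potconf{S_i} \config_i :: \W_2^{(i)}$ and $S_{i+1} \leq S_i$. Chaining these inequalities gives $S \geq S_k$, so the final configuration $\config'$ is well-typed with measure $S' = S_k \leq S$. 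Combining with the structural lemma applied to $\config'$, we obtain $S \geq S' \geq W_{\config'} = W'$, which is the first claim.

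Finally, for the special case, if the work done by the initial configuration $\config$ is $0$, then $W_{\config} = 0$, so by the structural lemma $S = P_{\config} = P$, the initial potential. The already-established bound $S \geq W'$ then reads $P \geq W'$, completing the proof. The only genuine content is the structural decomposition lemma; everything else is bookkeeping. I do not expect a real obstacle here, since Theorem~\ref{thm:soundness} already delivers both the preservation of well-typedness and the monotonicity of the measure; the one point requiring minor care is confirming that the $\W_1, \W_2$ contexts may legitimately change along the reduction (the theorem explicitly existentially quantifies fresh $\W_1', \W_2'$), so the induction must carry the contexts along rather than assume they are fixed.
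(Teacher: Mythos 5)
Your proposal is correct and follows essentially the same route as the paper: iterate the Soundness Theorem along the reduction sequence to get $S' \leq S$, decompose the \measure{} of the final configuration as $S' = P' + W'$ with $P' \geq 0$, and conclude $W' \leq S$, with the special case $W = 0$ giving $P \geq W'$. The only difference is presentational — you make explicit (as a structural lemma proved by induction on the configuration typing derivation, and an induction on the length of the step sequence carrying the existentially quantified contexts) what the paper asserts directly, which is sound but adds no new idea.
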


\begin{proof}
Applying the Soundness theorem successively,
we get that if $\config \step^* \config'$ and
$\Sg ; \W_1 \potconf{S} \config :: \W_2$ and
$\Sg ; \W_1' \potconf{S'} \config' :: \W_2'$,
then $S' \leq S$. Also, $S' = P' + W'$, where
$P'$ is the total potential of $\config'$, while
$W'$ is the total work performed so far in $\config'$.
Since $P' \geq 0$, we get that $W' \leq P' + W' =
S' \leq S$. In particular, if $W = 0$, we get that
$P = P + W = S \geq W'$, where $P$ and $W$
are the potential and work of the initial
configuration respectively.
\end{proof}

The progress theorem of \langname{} is a direct
consequence of progress in SILL~\cite{Toninho13esop}. Our cost semantics
are a cost observing semantics, i.e. it is just
annotated with  counters observing the work.
Hence, any runtime step that can be taken by a program
in SILL can be taken in \langname{}.%  This concludes our discussion
% of \langname{} with a proof of type safety and soundness
% of our type system.

%%% Local Variables:
%%% mode: latex
%%% mode: flyspell
%%% TeX-master: "main.tex"
%%% End:

\vspace{-5pt}

\section{Case Study: Stacks and Queues}\label{sec:case_study}
As an illustration of our type system, we present a case
study on stacks and queues. Stacks and queues
have the same interaction protocol: they store elements of
a variable type $A$
and support inserting and deleting elements. They only
differ in their implementation and resource usage. We
express their common interface type as the simple
session type $\st{A}$.
\begin{center}
  \begin{minipage}{0.5\linewidth}
    \begin{tabbing}
      $\st{A} = \echoice{$ \= $\m{ins} : A \lolli \st{A},$ \\
        \> $\m{del} : \ichoice{\m{none} : \one, \m{some} : A \tensor \st{A}}}$ 
    \end{tabbing}
  \end{minipage}
\end{center}
The session type dictates that a process providing a service of
type $\st{A}$, gives a client the choice to either insert ($\m{ins}$)
or delete ($\m{del}$) an element of type A. Upon receipt of the label
$\m{ins}$, the providing process expects to receive a channel of
type $A$ to be enqueued and recurses. Upon receipt of the label
$\m{del}$, the providing process either indicates that the queue
is empty ($\m{none}$), in which case it terminates, or that there
is a channel stored in the queue ($\m{some}$), in which case it
deletes this channel, sends it to the client, and recurses.

To account for the resource cost, we need to add potential
annotations leading to two different resource-aware types
for stacks and queues. Since we are interested in counting
the total number of messages exchanged, we set
$\mlab = \mchan = \mcl = 1$ in our type system to obtain
a concrete bound. 

\paragraph{Stacks} The type for stacks is defined below.
\begin{center}
  \begin{minipage}{0.5\linewidth}
\begin{tabbing}
$\stack{A} = {\with} \{$ \= $\pot{\m{ins}}{0} : A \lollipot{0} \stack{A},$ \\
\> $\pot{\m{del}}{2} : {\oplus}\{ \pot{\m{none}}{0} : \pot{\one}{0}, 
\pot{\m{some}}{0} : A \tensorpot{0} \stack{A}\}\}$
\end{tabbing}
  \end{minipage}
\end{center}
A stack is implemented as a sequence of $\mi{elem}$ processes
terminated by an $\mi{empty}$ process. The implementation and
type derivation of $\mi{elem}$ is presented below.
\begin{ntabbing}
\reset
$(x{:}A)\; (t{:} \stack{A})\; \entailpot{0} \mi{elem} :: (s : \stack{A})$ \label{elem_type}\\
 \quad \= $\procdef{elem}{x \; t}{s} =$ \label{elem_def} \\
 \> \quad $\m{case}\; s$ \label{elem:1} \\
 \>\qquad
\= $(\m{ins} \Rightarrow$ \= $\erecvch{s}{y} \semi$
\hspace{4em} \= $\%\quad (y{:}A)\; (x{:}A)\; (t{:}\stack{A}) \entailpot{0} 
s: \stack{A}$ \label{elem:2} \\
 \>\>\> $\procdef{elem}{x \; t}{s'} \semi$ 
 \> $\%\quad (y{:}A)\; (s' : \stack{A}) \entailpot{0} s : \stack{A}$ \label{elem:3} \\
 \>\>\> $\procdef{elem}{y \; s'}{s}$ \label{elem:4} \\
 \>\> $\mid \m{del} \Rightarrow$ \> $\esendl{s}{\m{some}} \semi$ \> 
 $\%\quad (x{:}A)\; (t{:}\stack{A}) \entailpot{1} s : A \tensorpot{0} \stack{A}$ \label{elem:5} \\
 \>\>\> $\esendch{s}{x} \semi$
  \> $\%\quad t{:}\stack{A} \entailpot{0} s : \stack{A}$  \label{elem:6}\\
 \>\>\> $\fwd{s}{t})$ \label{elem:7}
\end{ntabbing}
The recursive $\mi{elem}$ process stores an element of the stack.
It uses channel $x : A$ (element being stored) and channel
$t : \stack{A}$ (tail of the stack) and provides service along
$s : \stack{A}$. The implementation demonstrates that if the
$\mi{elem}$ process receives an $\m{ins}$ message along $s$,
it receives the element $y$ (line~\ref{elem:2}),
spawns a new $\mi{elem}$ process using its original element
$x$ (line~\ref{elem:3}), and continues with another instance
of the $\mi{elem}$ process with the received element $y$ (line~\ref{elem:4}).
In this way, it adds the element $y$ to the head of the sequence.
Otherwise, $\mi{elem}$ receives a $\m{del}$ message along $s$ and
responds with the $\m{some}$ label (line~\ref{elem:5}), followed by the channel
$x$ it stores (line~\ref{elem:6}). It then forwards all communication along $s$
to $t$.

Inserting an element has no resource cost, since no messages are
sent by the $\mi{elem}$ process. Similarly, deleting an
element has a cost of $2$, which is used to send two messages: the
$\m{some}$ label and the element $x$. This is
reflected by the type $\stack{A}$, which needs $0$ and $2$
potential units for insertion and deletion, respectively, as indicated
by the resource annotations.

We now implement and type the $\mi{empty}$ process.
\begin{ntabbing}
$\cdot \entailpot{0} \mi{empty} :: (s: \stack{A})$ \label{empty:type}\\
 \quad \= $\procdefna{empty}{s} =$ \label{empty:def}\\
 \> \qquad $\m{case}\; s$
\= $(\m{ins} \Rightarrow$ \= $\erecvch{s}{y} \semi$
\hspace{3em} \= $\%\quad (y{:}A)\;  \entailpot{0} 
s: \stack{A}$ \label{empty:1}\\
 \>\>\> $\procdefna{empty}{e} \semi$ 
 \> $\%\quad (y{:}A)\; (e : \stack{A}) \entailpot{0} s : \stack{A}$ \label{empty:2}\\
 \>\>\> $\procdef{elem}{y \; e}{s}$ \label{empty:3}\\
 \>\> $\mid \m{del} \Rightarrow$ \> $\esendl{s}{\m{none}} \semi$ \> 
 $\%\quad \cdot \entailpot{1} s : \one$ \label{empty:4}\\
 \>\>\> $\eclose{s})$ \label{empty:5}
\end{ntabbing}
The sequence of $\mi{elem}$ processes ends with an $\mi{empty}$ process,
providing service along channel $s$ where it can receive the label $\m{ins}$
or $\m{del}$. If it receives the label $\m{ins}$, it
receives the element $y$ to be inserted (line~\ref{empty:1}),
spawns a new $\mi{empty}$ process (line~\ref{empty:2}),
and continues execution as an $\mi{elem}$ process with the
received element (line~\ref{empty:3}). On receiving the
label $\m{del}$, it just sends the $\m{none}$ label (line~\ref{empty:4})
followed by the $\m{close}$ message (line~\ref{empty:5}),
indicating that the stack is empty.

Inserting an element
sends no messages and thus has cost $0$. Deleting an element
sends two messages and has cost $2$, which is reflected in the
resource annotations of the labels in the type $\st{A}$. Note that
deleting an element
requires the system to send back two messages,
either the $\m{none}$ label followed by the close message,
or the $\m{some}$ label followed by the element.
Therefore, an implementation of stacks will have
a resource cost of at least $2$ for deletion.
This shows that the above implementation is the most efficient
w.r.t. our cost semantics because insertion
has no resource cost, and deletion has the least possible cost.

\paragraph{Queues} Next, we consider the queue interface which is
achieved by using the same $\st{A}$ interface
and annotating it with a different potential.
The tight potential bound depends on the
number of elements stored in the queue. Hence,
a precise resource-aware type needs access
to this internal measure in the type.
A type $\queue{A}[n]$ intuitively defines a queue of size $n$,
i.e. a process offering along a channel of type
$\queue{A}[n]$ connects a sequence of $n$ elements.
\begin{center}
  \begin{minipage}{0.5\linewidth}
\begin{tabbing}
$\queue{A}[n] = {\with} \{$ \= $\pot{\m{ins}}{2n} : A \lollipot{0} \queue{A}[n+1],$ \\
\> $\pot{\m{del}}{2} : {\oplus}\{ \pot{\m{none}}{0} : \pot{\one}{0}, 
\pot{\m{some}}{0} : A \tensorpot{0} \queue{A}[n-1]\}\}$
\end{tabbing}
  \end{minipage}
\end{center}
Similar to a stack, a queue is also implemented
by a sequence of $\mi{elem}$ processes,
connected via channels, and terminated by
the $\mi{empty}$ process.
We show the implementation of $elem$ below.
\begin{ntabbing}
\reset
$(x{:}A)\; (t{:} \queue{A}[n-1])\; \entailpot{0} \mi{elem} :: (s:\queue{A}[n])$ \label{q_elem:type} \\
 \; \= $\procdef{elem}{x \; t}{s} =$ \label{q_elem:def}\\
 \> \; $\m{case}\; s \; ($ \label{q_elem:1}\\
\> \quad \= $\m{ins} \Rightarrow$ \; \= $\erecvch{s}{y} \semi$
\hspace{0.2pt} \= $\%\; (y{:}A)\; (x{:}A)\; (t{:}\queue{A}[n-1]) \entailpot{2n} 
s {:} \queue{A}[n+1]$ \label{q_elem:2} \\
 \>\>\> $\esendl{t}{\m{ins}} \semi$ 
 \> $\%\; (y{:}A)(x {:} A) (t {:} A \lollipot{0} \queue{A}[n]) \entailpot{1} s {:} \queue{A}[n+1]$ \label{q_elem:3} \\
  \>\>\> $\esendch{t}{y} \semi$ 
 \> $\%\quad (x : A)\; (t : \queue{A}[n]) \entailpot{0} s : \queue{A}[n+1]$ \label{q_elem:4} \\
 \>\>\> $\procdef{elem}{x \; t}{s}$ \label{q_elem:5} \\
 \>\> $\mid \m{del} \Rightarrow$ \> $\esendl{s}{\m{some}} \semi$ \> 
 $\%\quad (x{:}A) (t{:}\queue{A}[n-1]) \entailpot{1} s {:} A \tensorpot{0} \queue{A}[n-1]$ \label{q_elem:6} \\
 \>\>\> $\esendch{s}{x} \semi$
  \> $\%\quad t{:}\queue{A}[n-1] \entailpot{0} s : \queue{A}[n-1]$ \label{q_elem:7} \\
 \>\>\> $\fwd{s}{t}$) \label{q_elem:8}
\end{ntabbing}
Similar to the implementation of a stack, the $\mi{elem}$
process provides along $s : \queue{A}$, stores the
element $x : A$, and uses the tail of the queue $t : \queue{A}$.
When the $\mi{elem}$ process receives the $\m{ins}$
message along $s$, it receives the element $y$ (line~\ref{q_elem:2}),
and passes the $\m{ins}$ message (line~\ref{q_elem:3})
along with $y$ (line~\ref{q_elem:3}) to $t$.
Since the process at the other end of $t$ is also implemented
using $\mi{elem}$, it passes along the element to its tail
too. Thus the element travels to the end
of the queue where it is finally inserted.
The deletion is similar to that for stack.

For each insertion, the $\m{ins}$ label along with the
element travels to the end of the queue.
Hence, the resource cost of each insertion
is $2n$ where $n$ is the size of the queue and
this is reflected in the type $\queue{A}$ in
the potential annotation of $\m{ins}$ as $2n$.
Similar to the stack, deletion has a resource cost of $2$
to get back the $\m{some}$ label and the element.
We now consider the $empty$ process. 
\begin{tabbing}
$\cdot \entailpot{0} \mi{empty} :: (s:\queue{A}[0])$ \\
 \quad \= $\procdefna{empty}{s} =$ \\
 \> \quad $\m{case}\; s \; ($ \\
\> \quad \= $\m{ins} \Rightarrow$ \; \= $\erecvch{s}{y} \semi$
\hspace{3em} \= $\%\quad (y{:}A)\;  \entailpot{0} 
s: \queue{A}[1]$ \\
 \>\>\> $\procdefna{empty}{e} \semi$ 
 \> $\%\quad (y{:}A)\; (e : \queue{A}[0]) \entailpot{0} s : \queue{A}[1]$ \\
 \>\>\> $\procdef{elem}{y \; e}{s}$ \\
 \>\> $\mid \m{del} \Rightarrow$ \> $\esendl{s}{\m{none}} \semi$ \> 
 $\%\quad \cdot \entailpot{1} s : \one$ \\
 \>\>\> $\eclose{s})$
\end{tabbing}
The implementation of $\mi{empty}$ process is identical to that of stacks.
% We skip its description and focus on the resource usage.
Since insertion does not cause the process to send any
messages, its resource cost is $0$.  On the other hand, deletion
costs $2$ units because the process sends back the $\m{none}$
label followed by the $\m{close}$ message. This is correctly
reflected in the queue type. Since $s : \queue{A}[0]$,
the annotation for $\m{ins}$ is $2n = 2 \cdot 0 = 0$.
Similarly, $\m{del}$ is annotated with a potential of $2$.

The resource-aware types show that the implementation
for stacks is more efficient than that of queues. This 
follows from the potential annotation. The label $\m{ins}$
is annotated by $2n$ for $\stack{A}$ and with  $0$ for $\queue{A}$.
The label $\m{del}$ has the same annotation in both types.

\paragraph{Functional queues}
In a functional language, a queue is often implemented with two
lists. The idea is to enqueue into the first list and to dequeue from
the second list. If the second list is empty then we copy the first
list over, thereby reversing its order. Since the cost of the dequeue
operation varies drastically between the dequeue operations, amortized
analysis is again instrumental in the analysis of the worst-case
behavior and shows that the worst-case amortized cost for deletion
is actually a constant.

Appendix~\ref{subsec:app_queue} contains an implementation
of a functional queue in \langname{}. The type of the queue is
\begin{center}
  \begin{minipage}{0.5\linewidth}
\begin{tabbing}
$\queue{A}' = \echoice{$ \= $\pot{\m{ins}}{6} : A \lollipot{0} \queue{A}',$ \\
\> $\pot{\m{del}}{2} : \ichoice{$ \= $\pot{\m{some}}{0} : A \tensorpot{0} \queue{A}', \pot{\m{none}}{0} : \pot{\one}{0}}}$
\end{tabbing}
  \end{minipage}
\end{center}
Resource-aware session types enable us to translate the amortized
analysis to the distributed setting.
The type prescribes that an insertion has an amortized cost of $6$
while the deletion has an amortized cost of $2$. The main idea here
is that the elements are inserted with a constant potential in the
first list. While deleting, if the second list is empty, then this stored
potential in the first list is used to pay for copying the elements
over to the second list. The exact potential annotations for the
two lists can be found in Appendix~\ref{subsec:app_queue}. As
demonstrated from the resource-aware type, this implementation
is more efficient than the previous queue implementation, which
has a linear resource cost for insertion.

\paragraph{Generic clients}

The notion of efficiency of a store can be generalized and quantified
by considering clients for the stack and queue interface. A client
interacts with a generic store via a sequence of insertions and
deletions.  A provider can then implement the store as a stack, queue,
priority queue, etc. (same interface) and just expose the
resource-aware type for $\st{A}$.
Our type system can then use just the interface type and the generic client
implementation to derive resource bounds on the client. For
simplicity, the clients are typed in an affine type system which
allows us to throw away dummy channels (see below).

We provide a general mechanism for implementing clients
for a generic store. We define a generic $\st{A}$ type at which the
potential annotations are arbitrary natural numbers.
\begin{center}
  \begin{minipage}{0.5\linewidth}
\begin{tabbing}
$\st{A}[n] = {\with} \{$ \= $\pot{\m{ins}}{i} : A \lollipot{a} \st{A}[n+1],$ \\
\> $\pot{\m{del}}{d} : {\oplus}\{ \pot{\m{none}}{p} : \pot{\one}{e}, 
\pot{\m{some}}{s} : A \tensorpot{t} \st{A}[n-1]\}\}$
\end{tabbing}
  \end{minipage}
\end{center}
A client is defined by a list $\ell$ of $\m{ins}$ and $\m{del}$
messages that it sends to the store. We index the client $C_{\ell, n}$
using $\ell$, and the internal measure $n$ of the $\st{A}$ type.
The channel along which the client provides
is irrelevant for our analysis and is represented
using a dummy channel $d : D$. For ease of notation,
we define the potential needed for a client
$C_{\ell, n}$ as a function $\potf{\ell, n}$.

We implement the client $C_{\ell, n}$ as follows.
First, consider the case when $\ell = \nil$, i.e. an
empty list. 
\begin{eqnarray*}
\cdot \entailpot{0} C_{\nil, n} :: (d : D) \\
\procdefna{C_{\nil, n}}{d} = \eclose{d}
\end{eqnarray*}
The client for an empty list just closes the channel $d$.
We assume that all clients are typed with the cost-free metric to only
count for the messages sent inside the stores. So $C_{[], 0}$
needs $0$ potential.
For the potential function, this means
$\potf{\nil, n} = 0$.

Next, we implement the client when the head
of the list $\ell$ is $\m{ins}$.

\begin{tabbing}
$\W \; (x : A) \; (s : \st{A}[n]) \entailpot{q} C_{\m{ins}::\ell, n} :: (d : D)$ \\
\quad \= $\procdef{C_{\m{ins}::\ell, n}}{\W \; x \; s}{d} = $ \\
\> \quad \= $\esendl{s}{\m{ins}} \semi$
\hspace{6em} \= $\% \quad \W \; (x : A) \; (s : A \lollipot{a} \st{A}[n+1]) \entailpot{q-i} d : D$ \\
\>\> $\esendch{s}{x} \semi$
\> $\% \quad \W \; (s : \st{A}[n+1]) \entailpot{q - i - a} d : D$ \\
\>\> $\procdef{C_{\ell, n+1}}{\W \; s}{d}$
\end{tabbing}
The client sends an $\m{ins}$ label followed
by the element $x$. If $C_{\m{ins}::\ell, n}$ needs a potential
$q$, then the type derivation informs us that $C_{\ell, n+1}$
needs a potential $q - i - a$. Thus,
$\potf{\m{ins}::\ell, n} = \potf{\ell, n+1} + i + a$. Finally,
we show the client implementation if the head of the list
$\ell$ is $\m{del}$.

\begin{tabbing}
$\W \; (s : \st{A}[n]) \entailpot{q} C_{\m{del}::\ell} :: (d : D)$ \\
\quad \= $\procdef{C_{\m{del}::\ell, n}}{\W \; s}{d} = $ \\
\> \quad \= $\esendl{s}{\m{del}} \semi$
\hspace{2em} \= $\% \quad \W \; (s : \ichoice{\pot{\m{none}}{n} : \pot{\one}{e}, \pot{\m{some}}{s} : A \tensorpot{t} \st{A}[n-1]}) \entailpot{q-d} d : D$ \\
\>\> $\m{case} \; s \; ($ \\
\>\> $\m{some} \Rightarrow$ \; \=
$\erecvch{s}{x} \semi$
\hspace{1em} \= $\% \quad \W \; (x : A) \; (s : \st{A}[n-1]) \entailpot{q-d+s+t} d : D$ \\
\>\>\> $\procdef{C_{\ell, n-1}}{\W \; x \; s}{d}$ \\
\>\> $\mid \m{none} \Rightarrow$
\> $\ewait{s})$
\> $\% \quad \W \entailpot{q-d+p+e} d : D$
\end{tabbing}
The client sends the $\m{del}$ label and then
case analyzes on the label it receives. If it receives
the $\m{some}$ label, it receives the element and
then continues with $C_{\ell, n-1}$, else it
receives the $\m{none}$ label and waits for
the channel $s$ to close. In terms of the potential
function, this means
\[
\label{eqn:del_pot}
\potf{\m{del}::\ell, n} =
\begin{cases}
\potf{\ell, n-1} + d - s - t & \qquad \text{if} \quad n > 0 \\
\max(0, d - p - e) & \qquad \text{otherwise}
\end{cases}
\]
Walking through the list $\ell$ and chaining the
potential equations together, we can achieve a resource
bound on the client $C_{\ell, n}$ by computing
$\potf{\ell, n}$.

The $\stack{A}$ and $\queue{A}$ interface types
are specific instantiations of the $\st{A}$ type.
For the stack interface, plugging in appropriate
potential annotations
$i = a = p = e = s = t = 0$, and $d = 2$, we get
(ignoring the case where the stack becomes empty)
\[
\potf{\nil, n} = 0
\qquad
\potf{\m{ins}::\ell, n} = \potf{\ell, n+1}
\qquad
\potf{\m{del}::\ell, n} = \potf{\ell, n-1} + 2
\]
Similarly, considering the queue type as another instantiation
of the $\st{A}$ type, and plugging $a = p = e = s = t = 0$,
$i = 2n$ and $d = 2$, we get
\[
\potf{\nil, n} = 0
\qquad
\potf{\m{ins}::\ell, n} = \potf{\ell, n+1} + 2n
\qquad
\potf{\m{del}::\ell, n} = \potf{\ell, n-1} + 2
\]
Finally, looking at $\queue{A}'$ as another instantiation
of the $\st{A}$ type and plugging $a = p = e = s = t = 0$,
$i = 6$ and $d = 2$, we get
\[
\potf{\nil, n} = 0
\qquad
\potf{\m{ins}::\ell, n} = \potf{\ell, n+1} + 6
\qquad
\potf{\m{del}::\ell, n} = \potf{\ell, n-1} + 2
\]
This allows us to compare arbitrary clients of
two (same or different) interfaces and compare
their resource cost. The resource-aware types are
expressive enough to obtain these resource bounds
without referring the implementation of the store
interface. For instance,
an important property of queues is that every insertion is more
costly than the previous one. The cost of insertion depends
on the size of the queue, which, in turn, increases with every
insertion. Hence, the complexity of the
queue system depends on the sequence in which inserts
and deletes are performed. In particular, we can
consider the efficiency of two different clients for the
queue system, by solving the above system of equations.

For instance, consider two clients $Q_{\ell_1, n}$
and $Q_{\ell_2, n}$,
with two different message lists
$\ell_1 = [\m{ins}, \ldots, \m{ins}, \m{del},
\ldots, \m{del}]$, i.e. $m$
insertions followed by $m$
deletions, and
$\ell_2 = [\m{ins}, \m{del}, \m{ins}, \m{del}, \ldots, \m{ins},
\m{del}]$, i.e. $m$
instances of alternate insertions and deletions.  In both cases, we
have the same number of insertions and deletions. However, the
resource cost of the two systems are completely different.
Solving the system of equations,
we get that $\potf{\ell_1, n} = 2mn + m(m-1) + 2m$,
while $\potf{\ell_2, n} = 2m(n+1)$, which shows that
the second client is an order of magnitude more efficient
than the first one.
More examples are presented
in Appendix~\ref{sec:more}.

\vspace{-5pt}

\section{Related Work}
\label{sec:related}

Session types have been introduced by Honda~\cite{Honda93CONCUR,Honda98esop}.
The technical development in this article is based on previous work on
\cite{Toninho13esop,Pfenning15FOSSACS}.
By removing the potential annotation from the type
rules in Section~\ref{sec:typing} we arrive at the type system of
loc.\ cit. The internal measures and type families that we use are
inspired by~\cite{Griffith13nasa}. Other recent
innovations in session types include sharing of resources~\cite{Balzer17ICFP}
and dynamic monitors~\cite{JiaPOPL16}.
In contrast to our work, all the aforementioned articles do not
discuss static resource analysis.

Static resource bound analysis for sequential programs has been
extensively studied. Successful approaches are based on refinement
types~\cite{CicekGA15,CicekBGGH16}, linear dependent
types~\cite{LagoP13}, abstract
interpretation~\cite{Zuleger11,GulwaniMC09,AliasDFG10,CernyHKRZ15},
deriving and solving recurrence
relations~\cite{DannerLR15,FloresH14,AlbertGM12,KincaidBBR2017}, term
rewriting~\cite{BrockschmidtEFFG14,AvanziniLM15}.
These works do not consider message-passing programs nor concurrent or
parallel evaluation.

Our work is based on type-based amortized resource analysis.
Automatic amortized resource analysis (AARA) has been introduced as a type system to
automatically derive linear~\cite{Jost03} and polynomial
bounds~\cite{HoffmannW15} for sequential functional programs. It can
also be integrated with program logics to derive bounds for imperative
programs~\cite{Atkey10,CarbonneauxHRS17}. Moreover, it has been used
to derive bounds for term-rewrite systems~\cite{HofmannM15} and
object-oriented programs~\cite{Jost06}. A recent work also considers
bounds on the parallel evaluation cost (also called \emph{span}) of
functional programs~\cite{HoffmannS15}. The innovation of our work is
the integration of AARA and session types and the analysis of
message-passing programs that communicate with the outside
world. Instead of function arguments, our bounds depend on the
messages that are send along channels. As a result, the formulation
and proof of the soundness theorem is quite different from the
soundness of sequential AARA.

We are only aware of a couple of other works that study resource
bounds for concurrent programs. Gimenez et al.~\cite{GimenezM16}
introduced a technique for analyzing the parallel and sequential space
and time cost of evaluating interaction nets. While it also based on
linear logic and potential annotations, the flavor of the analysis is
quite different. Interaction nets are mainly used to model parallel
evaluation while session types focus on the interaction of
processes. A main innovation of our work is that processes can
exchange potential via messages. It is not clear how we can represent
the examples we consider in this article as interaction nets.
Albert et al.~\cite{AlbertFGM16,AlbertACGGMPR15} have studied
techniques for deriving bounds on the cost of concurrent programs that
are based on the actor model. While the goals of the work are similar
to ours, the used technique and considered examples are dissimilar. A
major difference is that our method is type-based and compositional.
A unique feature of our work is that
types describe bounds as functions of the messages that are sent along
a channel.

%%% Local Variables:
%%% mode: latex
%%% mode: flyspell
%%% TeX-master: "main"
%%% End:

\vspace{-5pt}

\section{Conclusion}
\label{sec:concle}

We have introduced resource-aware session types, a linear type system
that combines session types~\cite{Honda93CONCUR,Pfenning15FOSSACS} and
type-based amortized resource analysis~\cite{Jost03,HoffmannW15} to
reason about the resource usage of message-passing processes.
The soundness of the type system has been proved for a core
session-typed language with respect to a cost semantics that tracks the total
communication cost in a system of processes. We have demonstrated that
our technique can be used to prove tight resource bounds and supports
amortized reasoning by analyzing standard session-type data structures
such as distributed binary counters, stacks, and queues.

Our approach addresses some of the main challenges of analyzing
message-passing programs such as compositionality and description of
symbolic bounds. However, there are several open problems that we
plan to tackle as part of future work. The technique we have developed in
this paper does not yet account for the concurrent execution cost of
processes, or the \emph{span}. We are working on a companion paper that
describes a type-based analysis to derive bounds on the span; the
earliest time a concurrent computation terminates assuming an infinite
number of processors. Due to data dependencies in a concurrent
program, a process needs to wait for messages from other
processes, and computing these waiting times statically 
makes span analysis challenging.

Similarly, we have focused on the foundations and meta theory of
resource-aware session types in this paper. The next step is to
implement our analysis. An advantage of our method is that it
is based upon type-based amortized resource analysis for
sequential programs.
We will integrate the type system with SILL
for functional programs~\cite{HoffmannW15}.
We designed the type system with automation in mind and we are
confident that we can support automatic type inference
using templates and LP solving
similar to AARA~\cite{Jost03,HoffmannW15}. To this end, we are working
on an algorithmic version of the declarative type system presented
here. % Thus, resource-aware session types are an important step
% towards developing automatic type-based techniques for resource
% analysis of concurrent programs.

%%% Local Variables:
%%% mode: latex
%%% mode: flyspell
%%% TeX-master: "main"
%%% End:

\vspace{-5pt}

\bibliographystyle{splncs03}
\bibliography{refs,resource,hoffmann}

\appendix

\section{Proof of Soundness Theorem}\label{app:soundness}
We present the complete proof of the soundness
theorem. We reiterate the rules of cost semantics
and typing.
Figure~\ref{fig:cost_semantics} presents the cost
semantics for our session-typed language, while
Figure~\ref{fig:type_rules} presents the full set of
typing rules. Finally, Theorem~\ref{thm:soundness}
defines the soundness theorem establishing
that the typing rules for a configuration presented
in Figure~\ref{fig:config_typing} are sound w.r.t. the
rules for cost semantics presented in Figure
\ref{fig:cost_semantics}. We follow the complete
proof of the soundness theorem.

\begin{proof}
The proof proceeds by case analysis on the cost semantics
of our language, i.e. on the judgment $\config \step \config'$.
By the $\m{compose}$ rule, we can split the configuration
such that $\config = (\config_M \; \dc)$ and $\config' =
(\config_M' \; \dc)$ and $\config_M \step \config_M'$.
Using the $\m{compose}$ rule,
\[
\infer[\m{compose}]
{\Sg \semi \W \potconf{S_M + S_D}(\config_M \; \dc) :: \W''}
{\Sg \semi \W \potconf{S_M} \config_M :: \W' \quad \Sg \semi \W' \potconf{S_D} \dc :: \W''}
\]
\[
\infer[\m{compose}']
{\Sg \semi \W \potconf{S_M' + S_D}(\config_M' \; \dc) :: \W''}
{\Sg \semi \W \potconf{S_M} \config_M' :: \W' \qquad \Sg \semi \W' \potconf{S_D} \dc :: \W''}
\]
Hence, to show that $S' \leq S$, it suffices to show that
$S_M' \leq S_M$. We proceed by case analysis on the
$\config_M \step \config_M'$ judgment.
\begin{itemize}
\item Case ($\m{spawn}_c$) :
$\config_M = \proc{d}{w, \ecut{x}{P_x}{y}{Q_x}}$.
Inverting the typing rule $\m{spawn}$ on configuration $\config_M$,
we get
\begin{equation}\label{eqn:typ_spawn}
r \geq p + q \qquad 
\W_1 \; \W_2 \entailpot{r} \ecut{x}{P_x}{y}{Q_x} :: (d : U)
\end{equation}
and in $\config_M' = \proc{c}{0, P_c} \;
\proc{d}{w, Q_c}$, we get (the premise due to inversion)
\[
\W_1 \entailpot{p} P_c :: (c : S) \qquad 
\W_2 \; (c : S) \entailpot{q} Q_c :: (d : U)
\]
From the cost semantics rule $\m{spawn}_c$,
\[
\infer[\m{spawn}_c]
{\proc{c}{0, P_c} \qquad \proc{d}{w, Q_c}}
{\proc{d}{w, \ecut{x}{P_x}{y}{Q_x}}}
\]
Since $S$ is the sum of work and potential
of each process in the configuration, we get
$S_M = r + w$, while $S_M' = (p_P + w_P)
+ (p_Q + w_Q) = (p + 0) + (q + w) = p + q + w
\leq r + w = S_M$ since $p + q \leq r$ (by Equation
\ref{eqn:typ_spawn}).

\item Case ($\m{fwd}_s$) : $\config_M = \proc{c}{w, \fwd{c}{d}}$.
Inverting the typing rule $\m{fwd}$ on $\config_M$,
\[
q \geq 0 \qquad d : S \entailpot{q} \fwd{c}{d} :: (c : S)
\]
Inverting the same rule in $\config_M'$,
\[
q' \geq 0 \qquad d : S \entailpotcf{q'} \fwd{c}{d} :: (c : S)
\]
Using the semantics rule $\m{fwd}_s$,
\[
\infer[\m{fwd}_s]
{\msg{c}{w, \fwd{c}{d}}}
{\proc{c}{w, \fwd{c}{d}}}
\]
Since we are free to choose $q'$, we set $q' \leq q$.
Computing $S_M$ and $S_M'$, we get
\begin{eqnarray*}
S_M' & = & q' + w \\
& \leq & q + w \\
& = & S_M
\end{eqnarray*}

\item Case ($\m{fwd}^+_r$) : $\config_M = \proc{d}{w, P} \; \msg{c}{w', \fwd{c}{d}}$.
Inverting the $\m{fwd}$ rule for $\config_M$,
\begin{equation}\label{eqn:fwd_r_1}
q_1 \geq 0 \qquad \W \entailpot{q_1} P :: (d : S)
\end{equation}
\[
q_2 \geq 0 \qquad d : S \entailpot{q_2} \fwd{c}{d} :: (c : S)
\]
Using Equation~\ref{eqn:fwd_r_1} and noting that
$c$ and $d$ have the same type $S$, we get for $\config_M'$,
\[
q_1' \geq 0 \qquad \W \entailpot{q_1'} [c/d]P :: (c : S)
\]
From the cost semantics rule $\m{fwd}^+_r$, we get
\[
\infer[\m{fwd}^+_r]
{\proc{c}{w + w', [c/d]P}}
{\proc{d}{w, P} \qquad \msg{c}{w', \fwd{c}{d}}}
\]
Since we are free to choose $q_1'$, we set $q_1' \leq q_1$.
Computing $S_M$ and $S_M'$,  we get
\begin{eqnarray*}
S_M' & = & q_1' + w + w' \\
& \leq & q_1 + q_2 + w + w' \\
& = & (q_1 + w) + (q_2 + w') \\
& = & S_M
\end{eqnarray*}

\item Case ($\m{fwd}^-_r$) : $\config_M = \proc{e}{w, P} \; \msg{c}{w', \fwd{c}{d}}$.
Inverting the $\m{fwd}$ rule for $\config_M$,
\begin{equation}\label{eqn:fwd_r_2}
q_1 \geq 0 \qquad \W \; (c : S) \entailpot{q_1} P :: (e : U)
\end{equation}
\[
q_2 \geq 0 \qquad d : S \entailpot{q_2} \fwd{c}{d} :: (c : S)
\]
Using Equation~\ref{eqn:fwd_r_2} and noting that
$c$ and $d$ have the same type $S$, we get for $\config_M'$,
\[
q_1' \geq 0 \qquad \W \; (d : S) \entailpot{q_1'} [d/c]P :: (e : U)
\]
From the cost semantics rule $\m{fwd}^-_r$, we get
\[
\infer[\m{fwd}^-_r]
{\proc{e}{w + w', [d/c]P}}
{\proc{e}{w, P} \qquad \msg{c}{w', \fwd{c}{d}}}
\]
Since we are free to choose $q_1'$, we set $q_1' \leq q_1$.
\begin{eqnarray*}
S_M' & = & q_1' + w + w' \\
& \leq & q_1 + q_2 + w + w' \\
& = & (q_1 + w) + (q_2 + w') \\
& = & S_M
\end{eqnarray*}

\item Case ($\oplus C_s$) : $\config_M = \proc{c}{w, \esendl{c}{l_k} \semi P}$.
Inverting the typing rule $\oplus R_k$ on $\config_M$,
we get
\begin{equation}\label{eqn:typ_oplus_R_k_1}
q_1 \geq p + r_k + \mlab \qquad
\W \entailpot{q_1} (\esendl{c}{l_k} \semi P) :: (c : \ichoice{\pot{l_i}{r_i} : S_i}_{i \in I})
\end{equation}
and in $\config_M'$, we get (the premise due to inversion)
\[
\W \entailpot{p} [c'/c]P :: (c' : S_k)
\]
\begin{equation}\label{eqn:typ_oplus_fwd_oplus_R_k_2}
q_1' \geq q_2 + r_k \qquad c' : S_k \entailpot{q_1'} (\esendl{c}{l_k} \semi \fwd{c}{c'}) :: (c : \ichoice{\pot{l_i}{r_i} : S_i}_{i \in I})
\end{equation}
\begin{equation}\label{eqn:typ_oplus_fwd_oplus_R_k}
q_2 \geq 0 \qquad c' : S_k \entailpot{q_2} \fwd{c}{c'} :: (c : S_k)
\end{equation}
Using the cost semantics rule $\oplus C_s$, we get
\[
\infer[\oplus C_s]
{\proc{c'}{w+\mlab, [c'/c]P} \qquad \msg{c}{0, \esendl{c}{l_k} \semi \fwd{c}{c'}}}
{\proc{c}{w, \esendl{c}{l_k} \semi P}}
\]
Again, $S_M' = (p + w + \mlab) + (q_1' + 0)$.
Since, we need to prove that there exists such an $S_M'$,
we can choose $q_1'$ and $q_2$ arbitrarily such that
they satisfy Equations \ref{eqn:typ_oplus_fwd_oplus_R_k_2}
and \ref{eqn:typ_oplus_fwd_oplus_R_k}. We set
$q_2 = 0$ and $q_1' = r_k$. Hence,
$S_M' = (p + w + \mlab) + r_k \leq q_1 + w
\leq S_M$ (by Equation~\ref{eqn:typ_oplus_R_k_1}) 

\item Case ($\oplus C_r$) : $\config_M = \msg{c}{w, \esendl{c}{l_k} \semi \fwd{c}{c'}} \; \proc{d}{w', \ecase{c}{l_i}{Q_i}_{i \in I}}$.
Inverting the typing rule $\oplus L$ on $\config_M$,
we get
\begin{equation}\label{eqn:oplus_C_s_1}
q_1 \geq q_2 + r_k \qquad c' : S_k \entailpot{q_1} (\esendl{c}{l_k} \semi \fwd{c}{c'}) :: (c : \ichoice{\pot{l_i}{r_i} : S_i}_{i \in I})
\end{equation}
\begin{equation}\label{eqn:oplus_C_s_2}
q_2 \geq 0 \qquad c' : S_k \entailpot{q_2} \fwd{c}{c'} :: (c : S_k)
\end{equation}
\begin{equation}\label{eqn:oplus_C_s_3}
q_3 + r_k \geq q_k \qquad
\W \; (c : \ichoice{\pot{l_i}{r_i} : S_i}_{i \in I}) \entailpot{q_3} \ecase{c}{l_i}{Q_i}_{i \in I} :: (d : U)
\end{equation}
and in $\config_M'$ (premise of the typing rule due to inversion),
we get
\[
\W \; (c' : S_k) \entailpot{q_k} [c'/c]Q_k :: (d : U)
\]
Using the cost semantics rule $\oplus C_r$, we get
\[
\infer[\oplus C_r]
{\proc{d}{w+w', [c'/c]Q_k}}
{\msg{c}{w, \esendl{c}{l_k} \semi \fwd{c}{c'}} \qquad \proc{d}{w', \ecase{c}{l_i}{Q_i}_{i \in I}}}
\]
Again, $S_M' = q_k + w + w' \leq q_3 + r_k + w + w'
\leq q_3 + q_2 + r_k + w + w' \leq q_3 + q_1 + w + w'
\leq (q_1 + w) + (q_3 + w') = S_M$ (by Equations
\ref{eqn:oplus_C_s_1}, \ref{eqn:oplus_C_s_2} and
\ref{eqn:oplus_C_s_3}).

\item Case ($\with C_s$) : Analogous to $\lolli C_s$.

\item Case ($\with C_r$) : Analogous to $\lolli C_r$.

\item Case ($\tensor C_s$) : Analogous to $\oplus C_s$.

\item Case ($\tensor C_r$) : Analogous to $\oplus C_r$.

\item Case ($\lolli C_s$) : $\config_M = \proc{d}{w, \esendch{c}{e} \semi P}$.
Applying the rule $\lolli L$ on $\config_M$, we get
\[
q_1 \geq p + r + \mchan \qquad
\W \; (e : S) \; (c : S \lollipot{r} T) \entailpot{q_1} (\esendch{c}{e} \semi P) :: (d : U)
\]
Inverting the same rule on $\config_M'$, we get
\[
\W \; (c' : T) \entailpot{p} [c'/c]P :: (d : U)
\]
\[
q_1' \geq q_2 + r \qquad (e : S) \; (c : S \lollipot{r} T) \entailpot{q_1'} \esendch{c}{e} \semi \fwd{c'}{c} :: (c' : T)
\]
\[
q_2 \geq 0 \qquad c : T \entailpot{q_2} \fwd{c'}{c} :: (c' : T)
\]
From the cost semantics rule $\lolli C_s$, we get
\[
\infer[\lolli C_s]
{\proc{d}{w+\mchan, [c'/c]P} \qquad \msg{c'}{0, \esendch{c}{e} \semi \fwd{c'}{c}}}
{\proc{d}{w, \esendch{c}{e} \semi P}}
\]
Since we can choose arbitrary values for $q_1'$ and
$q_2$ satisfying the above inequalities, we set
$q_2 = 0$ and $q_1' = r$. Computing $S_M$
and $S_M'$, we get
\begin{eqnarray*}
S_M' & = & (p + w + \mchan) + (q_1'  + w') \\
& = & (p + r + \mchan + w) + (q_1' - r + w') \\
& \leq & q_1 + w + w') \\
& = & S_M
\end{eqnarray*}

\item Case ($\lolli C_r$) : $\config_M = \msg{c'}{w, \esendch{c}{e} \semi \fwd{c'}{c}} \; \proc{c}{w', \erecvch{c}{x} \semi Q_x}$.
Applying the rule $\lolli L$ on the message
in $\config_M$, we get
\[
q_1 \geq q_2 + r \qquad (e : S) \; (c : S \lollipot{r} T) \entailpot{q_1} \esendch{c}{e} \semi \fwd{c'}{c} :: (c' : T)
\]
\[
q_2 \geq 0 \qquad c : T \entailpot{q_2} \fwd{c'}{c} :: (c' : T)
\]
Applying the rule $\lolli R$ on the process
in $\config_M$, we get
\[
q_3 + r \geq p \qquad
\W \entailpot{q_3} (\erecvch{c}{x} \semi Q_x) :: (x : S \lollipot{r} T)
\]
Inverting the $\lolli R$ rule, we get for $\config_M'$,
\[
\W \; (e : S) \entailpot{p} [c'/c]Q_e :: (c' : T)
\]
From the cost semantics rule $\lolli C_r$, we get
\[
\infer[\lolli C_r]
{\proc{c}{w+w', [c'/c]Q_e}}
{\msg{c'}{w, \esendch{c}{e} \semi \fwd{c'}{c}} \qquad \proc{c}{w', \erecvch{c}{x} \semi Q_x}}
\]
Computing $S_M$ and $S_M'$, we get
\begin{eqnarray*}
S_M' & = & p + w + w' \\
& \leq & q_3 + r + w + w' \\
& \leq & q_3 + q_2 + r + w + w' \\
& \leq & q_3 + q_1 + w + w' \\
& = & (q_1 + w) + (q_3 + w') \\
& = & S_M
\end{eqnarray*}

\item Case ($\one C_s$) : $\config_M = \proc{c}{w, \eclose{c}}$.
Applying the $\one R$ rule on $\config_M$, we get
\[
q \geq r + \mcl \qquad
\cdot \entailpot{q} \eclose{c} :: (c : \pot{\one}{r})
\]
Inverting the same rule for $\config_M'$, we get
\[
q' \geq r \qquad \cdot \entailpot{q'} \eclose{c} :: (c : \pot{\one}{r})
\]
From the cost semantics rule $\one C_s$, we get
\[
\infer[\one C_s]
{\msg{c}{w+\mcl, \eclose{c}}}
{\proc{c}{w, \eclose{c}}}
\]
Setting $q' = r$ and computing $S_M$
and $S_M'$, we get
\begin{eqnarray*}
S_M' & = & q' + w + \mcl \\
& = & r + w + \mcl \\
& \leq & q + w \\
& = & S_M
\end{eqnarray*}

\item Case ($\one C_r$) : $\config_M = \msg{c}{w, \eclose{c}} \; \proc{d}{w', \ewait{c} \semi Q}$.
Applying the rule $\one R$ on the message
in $\config_M$, we get
\[
q_1 \geq r \qquad
\cdot \entailpot{q_1} \eclose{c} :: (c : \pot{\one}{r})
\]
Applying the $\one L$ rule on the process
in $\config_M$, we get
\[
q_2 + r \geq p \qquad
\W \; (c : \pot{\one}{r}) \entailpot{q_2} \ewait{c} \semi Q :: (d : U)
\]
Inverting the $\one L$ rule for $\config_M'$, we get
\[
\W \entailpot{p} Q :: (d : U)
\]
From the cost semantics rule $\one C_r$, we get
\[
\infer[\one C_r]
{\proc{d}{w + w', Q}}
{\msg{c}{w, \eclose{c}} \qquad \proc{d}{w', \ewait{c} \semi Q}}
\]
Computing $S_M$ and $S_M'$, we get
\begin{eqnarray*}
S_M' & = & p + w + w' + \mcl \\
& \leq & q_2 + r + w + w' \\
& \leq & q_2 + q_1 + w + w' \\
& \leq & (q_1 + w) + (q_2 + w') \\
& = & S_M
\end{eqnarray*}

\end{itemize}
Hence, in all of the above cases, $S_M' \leq S_M$
establishing that $S' \leq S$, thus showing that the
potential type system is sound w.r.t. the cost semantics.
\end{proof}

\section{More Examples}\label{sec:more}
Our type system is quite expressive and
can be used to derive bounds on many more
examples. In this section, we will derive bounds
on several list processes. We will start with
simple examples, such as the $nil$, $cons$
and $append$ processes. We will then derive
bounds on stacks and queues being implemented
using lists. Finally, we will conclude with some
higher order functions such as $map$ and $fold$.
For each of the following examples, we assume the
standard cost metric, where we count the number of
messages exchanged, i.e. $\mlab = \mchan = \mcl = 1$.
First, we consider the list protocol as a simple session
type.
\begin{tabbing}
$\lt{A} = \ichoice{$ \= $\m{cons} : A \tensor \lt{A},$ \\
\> $\m{nil} : \one}$
\end{tabbing}
The type prescribes that a process providing service
of type $\lt{A}$ will either send a label $\m{cons}$
followed by an element of type $A$ and recurse,
or will send a $\m{nil}$ label followed by a close
message and then terminate. On the client side
(i.e. a process that uses a channel of type $\lt{A}$
in its context), the opposite behavior is observed, i.e.
a client receives the messages that the provider sends
(sequence of $\m{cons}$ labels and elements terminated
by a $\m{nil}$ label and the close messsage).
The $append$ process is a SILL implementation
of the standard append function which appends two
lists.

\subsection{Basic Processes}
We present the implementations of $nil$, $cons$
and $append$ processes.

\subsection*{$nil$}
The $nil$ process is used to create an empty list.
Formally, a $nil$ process uses an empty context,
and provides an empty list along a channel $l : \lt{A}$.
Concretely, this means it sends a $\m{nil}$ label
followed by a $\m{close}$ message along $l$.
First, we introduce the resource-aware session type
for $\lt{A}$.
\begin{tabbing}
$\lt{A} = \ichoice{$ \= $\pot{\m{nil}}{0} : \pot{\one}{0},$ \\
\> $\pot{\m{cons}}{0} : A \tensorpot{0} \lt{A}}$
\end{tabbing}
This resource-aware type decorates each label
and type operator with $0$ potential, implying
that none of the messages carry any potential,
and the process potential needs to pay only for the
cost of sending the messages.
We present the implementation followed by the type
derivation for the $nil$ process.
\begin{tabbing}
$\cdot \; \entailpot{2} nil :: (l : \lt{A})$\\
$\quad$ \= $\procdefna{nil}{l} = $\\
\>\quad\=$\esendl{l}{\m{nil}} \semi$
\hspace{3em} \= $\%\quad \cdot \entailpot{1} l : \pot{\one}{0}$ \\
\>\>$\eclose{l}$
\> $\%\quad \cdot \entailpot{0} \cdot$
\end{tabbing}
The type of $nil$ process shows that the process
potential needed is $2$, which intuitively
agrees with our cost model. The process sends two
messages, each of them costing unit potential.
We explain the type derivation briefly. The initial
type of the process is $\cdot \entailpot{2} l : \lt{A}$.
Now, for $l$ to behave as an empty list, the $nil$
process needs to send the $\m{nil}$ label first.
As the $\lt{A}$ type prescribes, the $\m{nil}$ label
carries no potential, this send only costs $1$.
Updating the type of $l$ and the process potential,
we get $\cdot \entailpot{1} l : \pot{\one}{0}$.
 Finally, the $nil$ process needs to send the $\m{close}$
 message, which again costs $1$ as the type $\one$ carries no
 potential in the type definition of $\lt{A}$. Thus,
 our type system successully verifies that the $nil$
 process needs a potential of $2$, hence its resource
 usage is $2$.
 
 \subsection*{$cons$}
 Now, let's look at the $cons$ process.
\begin{tabbing}
$(x : A) \; (t : \pot{\lt{A}}{0}) \; \entailpot{2} cons :: (l : \pot{\lt{A}}{0})$\\
$\quad$ \= $\procdef{cons}{x \; t}{l} = $\\
\>\quad\=$\esendl{l}{\m{cons}} \semi$
\hspace{3em} \= $\%\quad (x : A) \; (t : \pot{\lt{A}}{0}) \entailpot{1} l : A \tensorpot{0} \pot{\lt{A}}{0}$ \\
\>\>$\esendch{l}{x} \semi$
\> $\%\quad (t : \pot{\lt{A}}{0}) \entailpot{0} l : \pot{\lt{A}}{0}$\\
\>\>$\fwd{l}{t}$
\end{tabbing}
We can do another annotated type for $cons$.
\begin{tabbing}
$\pot{\lt{A}}{1} = \ichoice{$ \= $\pot{\m{nil}}{0} : \one,$ \\
\> $\pot{\m{cons}}{1} : A \tensorpot{0} \pot{\lt{A}}{1}}$
\end{tabbing}
\begin{tabbing}
$(x : A) \; (t : \pot{\lt{A}}{1}) \; \entailpot{3} cons :: (l : \pot{\lt{A}}{1})$\\
$\quad$ \= $\procdef{cons}{x \; t}{l} = $\\
\>\quad\=$\esendl{l}{\m{cons}} \semi$
\hspace{3em} \= $\%\quad (x : A) \; (t : \pot{\lt{A}}{1}) \entailpot{1} l : A \tensorpot{0} \pot{\lt{A}}{1}$ \\
\>\>$\esendch{l}{x} \semi$
\> $\%\quad (t : \pot{\lt{A}}{1}) \entailpot{0} l : \pot{\lt{A}}{1}$\\
\>\>$\fwd{l}{t}$
\end{tabbing}

\subsection*{$append$}
Finally, let's try the $append$ process.
\begin{tabbing}
$\pot{\lt{A}}{2} = \ichoice{$ \= $\pot{\m{nil}}{0} : \one,$ \\
\> $\pot{\m{cons}}{2} : A \tensorpot{0} \pot{\lt{A}}{2}}$
\end{tabbing}
\begin{tabbing}
$(l_1 : \pot{\lt{A}}{2}) \; (l_2 : \pot{\lt{A}}{0}) \; \entailpot{0} append :: (l : \pot{\lt{A}}{0})$\\
$\quad$ \= $\procdef{append}{l_1 \; l_2}{l} = $\\
\>\quad\=$\casedef{l_1}$ \=$( \labdef{\m{cons}}$ \= $ \erecvch{l_1}{x}\semi$
\hspace{3em} \= $\%\quad 
(x : A) \; (l_1 : \pot{\lt{A}}{2}) \; (l_2 : \pot{\lt{A}}{0}) \entailpot{2} l : \pot{\lt{A}}{0}$ \\
\>\>\>\>$\esendl{l}{\m{cons}} \semi$
\> $\%\quad (x : A) \; (l_1 : \pot{\lt{A}}{2}) \; (l_2 : \pot{\lt{A}}{0}) \entailpot{1} l : \pot{\lt{A}}{0}$\\
\>\>\>\>$\esendch{l}{x} \semi$
\> $\%\quad (l_1 : \pot{\lt{A}}{2}) \; (l_2 : \pot{\lt{A}}{0}) \entailpot{0} l : \pot{\lt{A}}{0}$\\
\>\>\>\>$\procdef{append}{l_1 \; l_2}{l}$\\
\>\>\>$\mid \labdef{\m{nil}}$ \> $\ewait{l_1} \semi$
\> $\%\quad (l_1 : \one) \; (l_2 : \pot{\lt{A}}{0}) \entailpot{0} l : \pot{\lt{A}}{0}$\\
\>\>\>\> $\fwd{l}{l_2})$
\> $\%\quad (l_2 : \pot{\lt{A}}{0}) \entailpot{0} l : \pot{\lt{A}}{0}$\\
\end{tabbing}

%\begin{ntabbing}
%$(l_1 : \lt{A}) \; (l_2 : \lt{A}) \vdash append :: (l : \lt{A})$ \label{app_type}\\
%$\quad$ \= $\procdef{append}{l_1 \; l_2}{l} = $ \label{app_def}\\
%\>\quad\=$\casedef{l_1} ($ \label{app_1}\\
%\>\> \quad \=$\labdef{\m{cons}}$ \= $ \erecvch{l_1}{x}\semi$
%\hspace{1em} \= $\%\quad 
%(x : A) \; (l_1 : \lt{A}) \; (l_2 : \lt{A}) \vdash l : \lt{A}$ \label{app_2}\\
%\>\>\>\>$\esendl{l}{\m{cons}} \semi$
%\> $\%\quad (x : A) \; (l_1 : \lt{A}) \; (l_2 : \lt{A}) \vdash l : \lt{A}$ \label{app_3}\\
%\>\>\>\>$\esendch{l}{x} \semi$
%\> $\%\quad (l_1 : \lt{A}) \; (l_2 : \lt{A}) \vdash l : \lt{A}$ \label{app_4}\\
%\>\>\>\>$\procdef{append}{l_1 \; l_2}{l}$ \label{app_5}\\
%\>\>\>$\mid \labdef{\m{nil}}$ \> $\ewait{l_1} \semi$
%\> $\%\quad (l_1 : \one) \; (l_2 : \lt{A}) \vdash l : \lt{A}$ \label{app_6}\\
%\>\>\>\> $\fwd{l}{l_2})$
%\> $\%\quad (l_2 : \lt{A}) \vdash l : \lt{A}$ \label{app_7}
%\end{ntabbing}
%The type of the $append$ process (line \ref{app_type})
%dictates that it acts as a client for $l_1, l_2 : \lt{A}$,
%and offers along $l : \lt{A}$.

\subsection{Stacks as Lists}
The stack interface introduced in Section~\ref{sec:case_study}
can be implemented using lists. First, we define the resource-aware
types.
\begin{tabbing}
$\pot{\lt{A}}{2} = \ichoice{$ \= $\pot{\m{nil}}{2} : \pot{\one}{0},$ \\
\> $\pot{\m{cons}}{2} : A \tensorpot{0} \pot{\lt{A}}{2}}$
\end{tabbing}

\begin{tabbing}
$\pot{\stack{A}}{4} = \echoice{$ \= $\pot{\m{ins}}{4} : A \lollipot{0} \pot{\stack{A}}{4},$ \\
\> $\pot{\m{del}}{0} : \ichoice{$ \= $\pot{\m{some}}{0} : A \tensorpot{0} \pot{\stack{A}}{4},$\\
\>\>$\pot{\m{none}}{0} : \pot{\one}{0}}}$
\end{tabbing}

We need several sub-processes for the implementation of the
stack using a list. We will implement and type them first.
\begin{tabbing}
$\cdot \; \entailpot{4} nil :: (l : \pot{\lt{A}}{2})$\\
$\quad$ \= $\procdefna{nil}{l} = $\\
\>\quad\=$\esendl{l}{\m{nil}} \semi$
\hspace{3em} \= $\%\quad \cdot \entailpot{1} l : \pot{\one}{0}$ \\
\>\>$\eclose{l}$
\> $\%\quad \cdot \entailpot{0} \cdot$
\end{tabbing}

\begin{tabbing}
$(x : A) \; (t : \pot{\lt{A}}{2}) \; \entailpot{4} cons :: (l : \pot{\lt{A}}{2})$\\
$\quad$ \= $\procdef{cons}{x \; t}{l} = $\\
\>\quad\=$\esendl{l}{\m{cons}} \semi$
\hspace{3em} \= $\%\quad (x : A) \; (t : \pot{\lt{A}}{2}) \entailpot{1} l : A \tensorpot{0} \pot{\lt{A}}{2}$ \\
\>\>$\esendch{l}{x} \semi$
\> $\%\quad (t : \pot{\lt{A}}{2}) \entailpot{0} l : \pot{\lt{A}}{2}$\\
\>\>$\fwd{l}{t}$
\end{tabbing}

Finally, we can implement the stack interface using two processes,
the first is $stack\_new$, which creates an empty list and uses it as
an empty stack.
\begin{tabbing}
$\cdot \entailpot{4} stack\_new :: s : (\pot{\stack{A}}{4})$\\
$\quad$ \= $\procdefna{stack\_new}{s} = $\\
\>\quad \= $\procdefna{nil}{e} \semi$
\hspace{3em} \= $(e : \pot{\lt{A}}{2}) \entailpot{0} (s : \pot{\stack{A}}{4})$\\
\>\> $\procdef{stack}{e}{s}$
\end{tabbing}

The main process is called $stack$. It uses a list in its context and
provides service along $s$ which behaves as a stack.
\begin{tabbing}
$l : \pot{\lt{A}}{2} \entailpot{0} stack :: s : (\pot{\stack{A}}{4})$ \\
\quad \= $\procdef{stack}{l}{s}$ \\
\> \quad \= $\casedef{s}$\\
\>\> \quad \= $(\labdef{\m{ins}}$ \= $\erecvch{s}{x} \semi$
\hspace{3em} \= $\% \quad (x : A)(l : \pot{\lt{A}}{2}) \entailpot{4} s : \pot{\stack{A}}{2}$\\
\>\>\>\> $\procdef{cons}{x \; t}{l'}$
\>$\% \quad l' : \pot{\lt{A}}{2} \entailpot{0} s : \pot{\stack{A}}{4}$\\
\>\>\>\> $\procdef{stack}{l}{s}$\\
\>\>\> $\labdef{\m{del}}$
\> $\casedef{l}$\\
\>\>\>\>\quad \= $(\labdef{\m{cons}}$ \= $\erecvch{l}{x} \semi$
\hspace{2em} \= $\% \quad (x : A)(l : \pot{\lt{A}}{2}) \entailpot{2} s : \ichoice{\pot{\m{some}}{0} : A \tensorpot{0} \pot{\stack{A}}{4}, \pot{\m{none}}{0} : \pot{\one}{0}}$\\
\>\>\>\>\>\> $\esendl{s}{\m{some}} \semi$
\> $\% \quad (x : A)(l : \pot{\lt{A}}{2}) \entailpot{1} s : A \tensorpot{0} \pot{\stack{A}}{4}$\\
\>\>\>\>\>\> $\esendch{s}{x} \semi$
\> $\% \quad (l : \pot{\lt{A}}{2}) \entailpot{0} s : \pot{\stack{A}}{4}$\\
\>\>\>\>\>\> $\procdef{stack}{l}{s}$\\
\>\>\>\>\> $\labdef{\m{nil}}$
\> $\esendl{s}{\m{none}}$
\> $\% \quad (l : \pot{\one}{0}) \entailpot{1} s : \pot{\one}{0}$\\
\>\>\>\>\>\> $\ewait{l} \semi$
\> $\% \quad \cdot \entailpot{1} s : \pot{\one}{0}$\\
\>\>\>\>\>\> $\eclose{s}))$
\end{tabbing}

\subsection{Queues as 2 Lists}\label{subsec:app_queue}
A queue can be implemented using 2 lists. Insertion
in such a queue has a constant amortized cost. Since our
type system supports amortized analysis, we can derive
a constant resource bound for such an implementation.
The resource-aware types we will be using are as follows.
\begin{tabbing}
$\pot{\lt{A}}{(2,2)} = \ichoice{$ \= $\pot{\m{nil}}{2} : \pot{\one}{0},$ \\
\> $\pot{\m{cons}}{2} : A \tensorpot{0} \pot{\lt{A}}{(2,2)}}$
\end{tabbing}

\begin{tabbing}
$\pot{\lt{A}}{4} = \ichoice{$ \= $\pot{\m{nil}}{0} : \pot{\one}{0},$ \\
\> $\pot{\m{cons}}{4} : A \tensorpot{0} \pot{\lt{A}}{4}}$
\end{tabbing}

\begin{tabbing}
$\pot{\queue{A}}{(6,2)} = \echoice{$ \= $\pot{\m{enq}}{6} : A \lollipot{0} \pot{\queue{A}}{(6,2)},$ \\
\> $\pot{\m{deq}}{2} : \ichoice{$ \= $\pot{\m{some}}{0} : A \tensorpot{0} \pot{\queue{A}}{(6,2)},$\\
\>\>$\pot{\m{none}}{0} : \pot{\one}{0}}}$
\end{tabbing}

Again, we use the $nil$ and $cons$ sub-processes with a
different type.
\begin{tabbing}
$\cdot \; \entailpot{2} nil :: (l : \pot{\lt{A}}{4})$\\
$\quad$ \= $\procdefna{nil}{l} = $\\
\>\quad\=$\esendl{l}{\m{nil}} \semi$
\hspace{3em} \= $\%\quad \cdot \entailpot{1} l : \pot{\one}{0}$ \\
\>\>$\eclose{l}$
\> $\%\quad \cdot \entailpot{0} \cdot$
\end{tabbing}

\begin{tabbing}
$(x : A) \; (t : \pot{\lt{A}}{4}) \; \entailpot{6} cons :: (l : \pot{\lt{A}}{4})$\\
$\quad$ \= $\procdef{cons}{x \; t}{l} = $\\
\>\quad\=$\esendl{l}{\m{cons}} \semi$
\hspace{3em} \= $\%\quad (x : A) \; (t : \pot{\lt{A}}{4}) \entailpot{1} l : A \tensorpot{0} \pot{\lt{A}}{4}$ \\
\>\>$\esendch{l}{x} \semi$
\> $\%\quad (t : \pot{\lt{A}}{4}) \entailpot{0} l : \pot{\lt{A}}{4}$\\
\>\>$\fwd{l}{t}$
\end{tabbing}

The main process $queue2$ acts as a client for 2 lists, and provides
service along a queue interface. We provide the implementation and
its type derivation below.
{\footnotesize
\begin{tabbing}
$(in : \pot{\lt{A}}{4}) \; (out : \pot{\lt{A}}{(2,2)}) \entailpot{0} queue2 :: s : (\pot{\queue{A}}{(6,2)})$ \\
\quad \= $\procdef{queue2}{l}{s}$ \\
\> \quad \= $\casedef{s}$\\
\>\> \quad \= $(\labdef{\m{enq}}$ \= $\erecvch{s}{x} \semi$
\hspace{5em} \= $\% \quad (x : A)(in : \pot{\lt{A}}{4}) \; (out : \pot{\lt{A}}{(2,2)}) \entailpot{6} s : \pot{\queue{A}}{(6,2)}$\\
\>\>\>\> $\procdef{cons}{x \; in}{in'}$
\>$\% \quad (in' : \pot{\lt{A}}{4}) \; (out : \pot{\lt{A}}{(2,2)}) \entailpot{0} s : \pot{\queue{A}}{(6,2)}$\\
\>\>\>\> $\procdef{queue2}{in \; out}{s}$\\
\>\>\> $\labdef{\m{deq}}$
\> $\casedef{out}$\\
\>\>\>\>\quad \= $(\labdef{\m{cons}}$ \= $\erecvch{out}{x} \semi$
\hspace{0.2em} \= $\% \quad (x : A) \; (in : \pot{\lt{A}}{4}) \; (out : \pot{\lt{A}}{(2,2)}) \entailpot{2}$ \\
\>\>\>\>\>\>\> \% \hspace{2em} $ s : \ichoice{\pot{\m{some}}{0} : A \tensorpot{0} \pot{\queue{A}}{(6,2)}, \pot{\m{none}}{0} : \pot{\one}{0}}$\\
\>\>\>\>\>\> $\esendl{s}{\m{some}} \semi$
\> $\% \quad (x : A) \; (in : \pot{\lt{A}}{2}) \; (out : \pot{\lt{A}}{0}) \entailpot{1} s : A \tensorpot{0} \pot{\queue{A}}{4}$\\
\>\>\>\>\>\> $\esendch{s}{x} \semi$
\> $\% \quad (in : \pot{\lt{A}}{2}) \; (out : \pot{\lt{A}}{0}) \entailpot{0} s : \pot{\queue{A}}{4}$\\
\>\>\>\>\>\> $\procdef{queue2}{in \; out}{s}$\\
\>\>\>\>\> $\labdef{\m{nil}}$
\> $\ewait{out} \semi$
\> $\% \quad (in : \pot{\lt{A}}{2}) \entailpot{4} s : \pot{\queue{A}}{4}$\\
\>\>\>\>\>\> $\procdef{rev}{in}{out'} \semi$
\> $\% \quad (out' : \pot{\lt{A}}{0}) \entailpot{2} s : \pot{\queue{A}}{4}$\\
\>\>\>\>\>\> $\casedef{out'}$\\
\>\>\>\>\>\>\quad \= $(\labdef{\m{cons}}$ \= $\erecvch{out'}{x} \semi$
\hspace{0em} \= $\% \quad (x : A) \; (out' : \pot{\lt{A}}{0}) \entailpot{2}$\\
\>\>\>\>\>\>\>\>\> $\% \quad \quad \quad s : \ichoice{\pot{\m{some}}{0} : A \tensorpot{0} \pot{\queue{A}}{4}, \pot{\m{none}}{0} : \pot{\one}{0}}$\\
\>\>\>\>\>\>\>\> $\esendl{s}{\m{some}} \semi$
\> $\% \quad (x : A) \; (out' : \pot{\lt{A}}{0}) \entailpot{1} s : A \tensorpot{0} \pot{\queue{A}}{4}$\\
\>\>\>\>\>\>\>\> $\esendch{s}{x} \semi$
\> $\% \quad (out' : \pot{\lt{A}}{0}) \entailpot{0} s : \pot{\queue{A}}{4}$\\
\>\>\>\>\>\>\>\> $\procdefna{nil}{in'} \semi$
\> $\% \quad (in' : \pot{\lt{A}}{2}) \; (out' : \pot{\lt{A}}{0}) \entailpot{0} s : \pot{\queue{A}}{4}$\\
\>\>\>\>\>\>\>\> $\procdef{queue2}{in' \; out'}{s}$\\
\>\>\>\>\>\>\> $\labdef{\m{nil}}$
\> $\ewait{out'} \semi$
\> $\% \quad \cdot \entailpot{0} s : \ichoice{\pot{\m{some}}{0} : A \tensorpot{0} \pot{\queue{A}}{4}, \pot{\m{none}}{0} : \pot{\one}{0}}$\\
\>\>\>\>\>\>\>\> $\esendl{s}{\m{none}} \semi$
\> $\% \quad \cdot \entailpot{0} s : \pot{\one}{0}$\\
\>\>\>\>\>\>\>\> $\eclose{s} )))$
\end{tabbing}
}

\subsection{Higher Order Functions}
Let's consider some higher order functions. First, let's consider the
$map$ function.

\begin{tabbing}
$\pot{\mapper{AB}}{2} = \echoice{$ \= $\pot{\m{next}}{0} : A \lollipot{0} B \tensorpot{2} \pot{\mapper{AB}}{2},$ \\
\> $\pot{\m{done}}{0} : \pot{\one}{2}}$
\end{tabbing}

\begin{tabbing}
$\pot{\lt{A}}{2} = \ichoice{$ \= $\pot{\m{nil}}{1} : \pot{\one}{0},$ \\
\> $\pot{\m{cons}}{2} : A \tensorpot{0} \pot{\lt{A}}{2}}$
\end{tabbing}

\begin{tabbing}
$\pot{\lt{A}}{0} = \ichoice{$ \= $\pot{\m{nil}}{0} : \pot{\one}{0},$ \\
\> $\pot{\m{cons}}{0} : A \tensorpot{0} \pot{\lt{A}}{0}}$
\end{tabbing}
Now, let's consider the implementation of the $map$ function.
\begin{tabbing}
$(l : \pot{\lt{A}}{2}) \; (m : \pot{\mapper{AB}}{2}) \entailpot{0} map :: (k : \pot{\lt{A}}{0})$\\
\quad \= $\procdef{map}{l \; m}{k} =$\\
\> \quad \= $\casedef{l}$\\
\>\> \quad \= $(\labdef{\m{cons}}$ \= $\erecvch{l}{x} \semi$
\hspace{3em} \= $\% \quad (x : A) \; (l : \pot{\lt{A}}{2}) \; (m : \pot{\mapper{AB}}{2})
\entailpot{2} (k : \pot{\lt{A}}{0})$\\
\>\>\>\> $\esendl{m}{\m{next}} \semi$
\> $\% \quad (x : A) \; (l : \pot{\lt{A}}{2}) \; (m : A \lollipot{0} B \tensorpot{2} \pot{\mapper{AB}}{2})
\entailpot{1} (k : \pot{\lt{A}}{0})$\\
\>\>\>\> $\esendch{m}{x} \semi$
\> $\% \quad (l : \pot{\lt{A}}{2}) \; (m : B \tensorpot{2} \pot{\mapper{AB}}{2}) \entailpot{0} (k : \pot{\lt{A}}{0})$\\
\>\>\>\> $\erecvch{m}{y} \semi$
\> $\% \quad (l : \pot{\lt{A}}{2}) \; (y : B) \; (m : \pot{\mapper{AB}}{2}) \entailpot{2} (k : \pot{\lt{A}}{0})$\\
\>\>\>\> $\esendl{k}{\m{cons}} \semi$
\> $\% \quad (l : \pot{\lt{A}}{2}) \; (y : B) \; (m : \pot{\mapper{AB}}{2}) \entailpot{1} (k : A \tensorpot{0} \pot{\lt{A}}{0})$\\
\>\>\>\> $\esendch{k}{y} \semi$
\> $\% \quad (l : \pot{\lt{A}}{2}) \; (m : \pot{\mapper{AB}}{2}) \entailpot{0} (k : \pot{\lt{A}}{0})$\\
\>\>\>\> $\procdef{map}{l \; m}{k}$\\
\>\>\> $\labdef{\m{nil}}$
\> $\ewait{l} \semi$
\> $\% \quad (m : \pot{\mapper{AB}}{2}) \entailpot{1} (k : \pot{\lt{A}}{0})$\\
\>\>\>\> $\esendl{m}{\m{done}} \semi$
\> $\% \quad (m : \pot{\one}{2}) \entailpot{0} (k : \pot{\lt{A}}{0})$\\
\>\>\>\> $\ewait{m} \semi$
\> $\% \quad \cdot \entailpot{2} (k : \pot{\lt{A}}{0})$\\
\>\>\>\> $\esendl{k}{\m{nil}} \semi$
\> $\% \quad \cdot \entailpot{1} (k : \pot{\one}{0})$\\
\>\>\>\> $\eclose{k} )$
\end{tabbing}
Now, let's consider the $fold$ function.
\begin{tabbing}
$\pot{\fdr{AB}}{0} = \echoice{$ \= $\pot{\m{next}}{0} : A \lollipot{0} B \lollipot{0} B \tensorpot{0} \pot{\fdr{AB}}{0},$ \\
\> $\pot{\m{done}}{0} : \pot{\one}{0}}$
\end{tabbing}

\begin{tabbing}
$\pot{\lt{A}}{3} = \ichoice{$ \= $\pot{\m{nil}}{1} : \pot{\one}{0},$ \\
\> $\pot{\m{cons}}{3} : A \tensorpot{0} \pot{\lt{A}}{2}}$
\end{tabbing}
Now, let's consider the implementation of the $fold$ function.
\begin{tabbing}
$(l : \pot{\lt{A}}{3}) \; (m : \pot{\fdr{AB}}{0}) \; (b : B) \entailpot{0} fold :: (r : B)$\\
\quad \= $\procdef{fold}{l \; m \; b}{r} =$\\
\> \quad \= $\casedef{l}$\\
\>\> \quad \= $(\labdef{\m{cons}}$ \= $\erecvch{l}{x} \semi$
\hspace{3em} \= $\% \quad (x : A) \; (l : \pot{\lt{A}}{2}) \; (m : \pot{\fdr{AB}}{0}) \; (b : B)
\entailpot{3} (r : B)$\\
\>\>\>\> $\esendl{m}{\m{next}} \semi$
\> $\% \quad (x : A) \; (l : \pot{\lt{A}}{3}) \; (m : A \lollipot{0} B \lollipot{0} B \tensorpot{2} \pot{\fdr{AB}}{0}) \; (b : B)
\entailpot{2} (r : B)$\\
\>\>\>\> $\esendch{m}{x} \semi$
\> $\% \quad (l : \pot{\lt{A}}{3}) \; (m : B \lollipot{0} B \tensorpot{2} \pot{\fdr{AB}}{0}) \; (b : B)\entailpot{1} (r : B)$\\
\>\>\>\> $\esendch{m}{b} \semi$
\> $\% \quad (l : \pot{\lt{A}}{3}) \; (m : B \tensorpot{2} \pot{\fdr{AB}}{0})\entailpot{0} (r : B)$\\
\>\>\>\> $\erecvch{m}{y} \semi$
\> $\% \quad (l : \pot{\lt{A}}{2}) \; (y : B) \; (m : \pot{\fdr{AB}}{0}) \entailpot{2} (r : B)$\\
\>\>\>\> $\procdef{map}{l \; m}{k}$\\
\>\>\> $\labdef{\m{nil}}$
\> $\ewait{l} \semi$
\> $\% \quad (m : \pot{\fdr{AB}}{0}) \; (b : B) \entailpot{1} (r : B)$\\
\>\>\>\> $\esendl{m}{\m{done}} \semi$
\> $\% \quad (m : \pot{\one}{0}) \; (b : B) \entailpot{0} (r : B)$\\
\>\>\>\> $\ewait{m} \semi$
\> $\% \quad (b : B) \entailpot{0} (r : B)$\\
\>\>\>\> $\fwd{r}{b})$
\end{tabbing}

%%% Local Variables:
%%% mode: latex
%%% TeX-master: "main.tex"
%%% End:

\end{document}